\newcommand{\double}       {\baselineskip 12pt}
\newcommand{\single}       {\baselineskip 12pt}
\newtheorem{theorem}{Theorem}[section]
\newcommand{\cb}          {\begin{tabbing}MMMMM\=MM\=MM\=MM\=MM\=MM\=MM\=MM\=MM\=MM\= \kill}
\newcommand{\ce}          {\end{tabbing}}
\newcommand{\tset}[1]      {\{{#1}\}}                     
\newcommand{\separate}     {\vspace{0.3cm}\begin{center}*~~~~~~~~~~*~~~~~~~~~~*\end{center}\vspace{0.3cm}}
\newcommand{\dqt}[1]        {"{#1}"}
\def\emph{\textsl}
\def\em{\sl}
\def\textbf{\pmb}
\def\dfeq{\stackrel{\triangle}{=}}
\def\mystyle{}
\def\Db{{\mathcal{D}}}
\def\Ft{{\mathbb{F}}}
\def\Qt{{\mathbb{Q}}}
\def\Rs{{\mathcal{R}}}
\def\Ps{{\mathcal{R}}_+}
\def\Ng{{\mathcal{R}}_-}
\def\Wd{{\mathcal{W}}}
\def\Zt{{\mathcal{Z}}}
\def\ai{n}
\def\aj{m}
\def\N{N}
\def\db{D}
\def\ps{{r_+}}
\def\ng{{r_-}}
\def\bi{k}
\def\bj{h}
\def\K{K}
\def\ci{\nu}
\def\cj{\mu}
\def\ck{\zeta}
\def\cl{\xi}
\def\W{W}
\def\R{R}   
\def\di{p}
\def\dj{v}
\newcommand{\NW}[1]   {W_{1}}
\def\M{M}
\newcommand{\w}   {u}
\newcommand{\zexp}[1]   {\exp\left[-\left(\frac{\gamma_{#1}}{\sigma_{#1}}\right)^2\right]}
\newcounter{myremark}
\newcounter{myexample}
\begin{document}

\title{Relevance Feedback with Latent Variables in Riemann spaces}
\author{Simone Santini}
\date{Universidad Aut\'onoma de Madrid}

\maketitle

\single

\begin{abstract}
  In this paper we develop and evaluate two methods for relevance
  feedback based on endowing a suitable \dqt{semantic query space}
  with a Riemann metric derived from the probability distribution of
  the positive samples of the feedback. The first method uses a
  Gaussian distribution to model the data, while the second uses a
  more complex Latent Semantic variable model. A mixed
  (discrete-continuous) version of the Expectation-Maximization
  algorithm is developed for this model.

  We motivate the need for the semantic query space by analyzing in
  some depth three well-known relevance feedback methods, and we
  develop a new experimental methodology to evaluate these methods and
  compare their performance in a neutral way, that is, without making
  assumptions on the system in which they will be embedded.
\end{abstract}

\double

\section{Introduction}
Relevance Feedback is an important and widespread Information
Retrieval technique for query modification that formulates a new query
based on a previous one and on the response of the user to the answer
to that query. It is, by now, a classic: it origin in Information
Retrieval can be traced back to the early 1970s, with the work of
Rocchio on the SMART retrieval system \cite{rocchio:71}. Like any true
classic, relevance feedback has provided a nearly inexhaustible
breeding ground for methods and algorithms whose development continue
to this day.

In very general terms, let $\Db=\tset{u_1,\ldots,u_\db}$ be a data
base containing $\db$ items, and let $q^0$ be a query. As a result of
the query, the data base proposes a set of $\N$ results,
$\Rs^0\subseteq\Db$. Out of this set, the user selects two subsets:
the set of $\ps$ \emph{positive} (relevant) documents,
$\Ps^0\subseteq\Rs$, and the set of $\ng$ \emph{negative}
(counter-exemplar) ones $\Ng^0\subseteq\Rs$ with, in general,
$\Ps^0\cap\Ng^0=\emptyset$. This information is used by the system to
compute a new query $q^1=q^1(q^0,\Rs^0,\Ps^0,\Ng^0)$, which is then
used to produce a new set of results $\Rs^1$. The process can be
iterated \emph{ad libitum} to obtain results sets $\Rs^t$ and queries
\begin{equation}
  \label{berlin}
  q^{t+1}=q^{t+1}(q^0,q^t,\Rs^t,\Ps^t,\Ng^t). 
\end{equation}
Note that in document information retrieval the query $q^0$ is
expressed directly by the user and is often considered more
significant and stable than those which are automatically
generated. For this reason, the parameter $q^0$ is always present in
the calculation of $q^t$. The situation is different in image
retrieval, as we shall see in the following.

Relevance feedback has been applied to many types of query systems,
including systems based on Boolean queries
\cite{salton:97,frakes:92}. However, its most common embodiment is in
similarity-based query systems, in which it offers a viable solution
for expressing example-based queries \cite{zhou:03}. This is
especially useful in a field like image retrieval, in which Boolean
queries are seldom used. In this case, the items $u_i\in\Db$ are
points in a metric space $\Ft(\Omega^0)$, where $\Omega^0$ is the
metric of the space. Given a query $q^0$, each item of the data base
receives a \emph{score} which depends on the distance between the item
and the query as given by the metric $\Omega^0$,
$s_i=s(u_i,q^0,\Omega^0)$. In this case, it is possible not only to
change the query based on the feedback, but to change the metric of
the query space as well. Iteration $t$ is now characterized by the
pair $(q^t,\Omega^t)$, where $q^t$ is the query (typically, $q^t$ is a
point in $\Ft(\Omega^t)$ or a set of such points) and $\Omega^t$ is
the metric on which the distance from the query and, consequently, the
score of each image in $\Db$ will be computed. Given the feedback, the
two are updated as
\begin{equation}
  \label{qzero}
  \begin{aligned}
    q^{t+1} &= q^{t+1}(q^0,q^t,\Omega^t, \Rs^t,\Ps^t,\Ng^t) \\
    \Omega^{t+1} &= \Omega^{t+1}(\Omega^0,\Omega^t,\Ps^t)
  \end{aligned}
\end{equation}

Not all relevance feedback models implement the full scheme. Rocchio's
algorithm \cite{rocchio:71}, for example, is a query rewriting
technique that doesn't change the metric of the space, while MARS
\cite{rui:97} changes the metric space but doesn't do explicit query
rewriting.  Algorithms that do metric modification usually use
statistical methods to fit a parametric model to the observed
relevance data. Many of these methods use only the set $\Ps$ of
positive examples, ignoring $\Ng$. The reason for this is that $\Ps$
is usually a reasonably reliable statistical sample of what the user
wants (except for the cases detailed in section~\ref{largedim}). Not
so $\Ng$, since there are often many and contrasting criteria under
which an element can be deemed irrelevant. Paraphrasing Tolstoy, one
could say that \emph{positive samples are all alike; every negative
  sample is negative in its own way}.  Rocchio's algorithm, on the
other hand, uses both positive and negative examples.

In this paper we shall present and evaluate two methods of relevance
feedback. We shall build them in two stages. First, we shall
endow a reduced-dimensionality, \dqt{semantic} feature space with a
Gaussian similarity field and with the Riemann metric induced by this
field. The extension of this model to a mixture of Gaussians leads
quite naturally to our second model, with latent variables. This, in
turn, will result in an extension to mixed (discrete/continuous)
observations of what in Information Retrieval is known as
Probabilistic Latent Semantic analysis \cite{hofmann:01}.

The paper is organized as follows. In section 2, we shall analyze in
some detail two methods for metric change: MARS \cite{rui:97} and MindReader
\cite{ishikawa:98}. The purpose of this background is twofold. Firstly, it
provides a general introduction to metric modification, making the
paper more self-contained. Secondly, it serves the purpose of
introducing the problematic of dimensionality reduction, which we
present in section 2.3 following the work of Rui and Huang \cite{rui:99}.

In section 3, we shall analyze the dimensionality problem from an
alternative point of view, leading to the definition of semantic
spaces composed of groups of semantically homogeneous features, and
the definition of a higher level, reduced dimensionality, feature
space, in which the dimensions are given by distance measures in the
two semantic spaces.

In section 4, we present the first of our relevance feedback
schemes. We shall model the set of positive examples in a semantic
space using a Normal distribution, and use this distribution to endow
the feature space with a Riemann metric that we shall then use to
\dqt{score} the images in the data base.

In section 5, we introduce the latent variable model. A series of
binary stochastic variables is used to model abstract topics that are
expressed in the positive samples. Each topic endows each semantic
space with a Normal distribution. A mixed (discrete-continuous)
version of Expectation Maximization (EM, \cite{dempster:77}) is
developed here to determine the parameters of the model, and then the
results of section 4 are used to define a distance-like function that
is used to score the data base.

In section 6, a system-neutral testing methodology is developed to
evaluate the algorithms abstracting from the system of which they will
be part, and is applied to evaluating the two schemes presented here
vis-\`a-vis the three algorithms presented in section 2. Conclusions
are given in section 7.

\section{Relevant Background}
In this section we shall describe in detail a limited number of
approaches that bear direct relevance on the work presented here.  We
shall not try to present a bibliography of related work.
Instead, we shall use this section
to build a collection of techniques and open problems that we shall
use in the following sections in relation to our work.
\begin{table}[tbhp]
  \begin{center}
    \begin{tabular}{|c|c|p{3.75in}|}
      \hline
      \hline
      Symbol & Range & \multicolumn{1}{|c|}{meaning} \\
      \hline
      $\db$  & ${\mathbb{N}}$ & Number of elements in the data base \\
      $\Db$ & $\tset{d_1,\ldots,d_\db}$ & Data base of documents \\
      $\N$  & ${\mathbb{N}}$ & Number of elements selected for feedback \\
      $\R$  & ${\mathbb{N}}$ & Number of dimension in the reduced representation of Rui and Huang \\
      $\W$  & ${\mathbb{N}}$ & Number of word spaces in the latent model. \\
      $\K$  & ${\mathbb{N}}$ & Number of latent variables. \\
      $\M$  & ${\mathbb{N}}$ & Dimension of a generic vector. \\
      $\M_\ci$  & ${\mathbb{N}}$ & Dimension of the $\ci$th word space. \\
      $\Ft$ & $\equiv{\mathbb{R}}^{\M}$ & Complete feature space. \\
      $\Ft^\ci$ & $\equiv{\mathbb{R}}^{\M_\ci}$ & Feature space of the $\ci$th word. \\
      $\w_\ai$ & ${\mathbb{R}}^{\M}$ & Feature vector of the $\ai$th sample \\
      $\w_{\ai\ci}$ & ${\mathbb{R}}^{\M_\ci}$ & Feature vector of the $\ci$th word if the $\ai$th sample \\
      $\w_{\ai\ci,\di}$ & ${\mathbb{R}}^{\M_\ci}$ & $\di$th dimension of the $\ci$th word if the $\ai$th sample \\
      \hline
      \hline
      \multicolumn{3}{|c|}{indices} \\
      \hline
      \hline
      Index & span & \multicolumn{1}{|c|}{elements indexed} \\
      \hline
      $\ai,\aj$ & $1,\ldots,\db$ & elements in a data base (used rarely, will not conflict with 
                                  the other use of the same symbols) \\
      $\ai,\aj$ & $1,\ldots,\N$ & elements in a feedback set or an image set \\
      $\bi,\bj$ & $1,\ldots,\K$ & latent variables \\
      $\ci,\cj, \ck,\cl$ & $1,\ldots,\W$ & word spaces \\
      $\di,\dj$ & $1,\ldots, \M$ & dimensions of the feature spaces \\
      \hline
    \end{tabular}
  \end{center}
  \caption{}
  \label{symbols}
\end{table}

In all the following discussion we shall assume that we have a data
base of $\db$ elements, $\Db=\tset{d_1,\ldots,d_\db}$. We shall use the
indices $\ai,\aj=1\ldots,\db$ to span the elements of the data base.  In
its simplest representation each document $d_\ai$ is represented as a
point $\w_\ai$ is a point in a smooth manifold (in later sections we
shall extend this representation to the Cartesian product of a finite
number of manifolds). Depending on the model, this manifold will be
either ${\mathbb{R}}^\M$ (the $\M$-dimensional Euclidean space) or
${\mathbb{S}}^{\M-1}$ (the unit sphere in ${\mathbb{R}}^\M$). We shall
indicate this space as $\Ft$, specifying which manifold it represent
whenever necessary. The individual co\"ordinates will be identified
using the indices $\di,\dj=1,\ldots,\M$. The $\ai$th item of the
database, $d_\ai$ is represented as the vector
\begin{equation}
  \w_{\ai} = [\w_{\ai{1}}, \ldots, \w_{\ai\M}]'
\end{equation}

The initial query will be denoted by $q$ (or $q^0$), with
$q\in\Ft$. Note that with this choice we assume that the query is a
point in the feature space. The extension of all our considerations to
queries represented as sets of points in the feature space is not
hard, but would complicate our presentation considerably, so we shall
not consider such case. The iterations of the relevance feedback will
be indicated using the indices $t,v\in{\mathbb{N}}$; these indices
will be used in a functional notation ($q(t)$). The query resulting
after the $t$th iteration of relevance feedback will be indicated as
$q(t)$, and the relative result set as $\Rs(t)$.  After the $t$th
iteration, the user feedback will produce two sets of items: a set
$\Ps(t)\subseteq\Rs(t)$ of positive items, and a set
$\Ng(t)\subseteq\Rs(t)$ of negative items. The manifold $\Ft$ is
endowed with a metric that, in general, will vary as the relevance
feedback progresses. We indicate with $\Omega^t$ the metric and with
$\delta(t)$ the associated distance function at step $t$. The query at
step $t$ is the pair $Q(t)=(q(t),\delta(t))$. In its most general
formulation, a \emph{relevance feedback scheme} is a function $\Phi$
such that
\begin{equation}
  \label{rfscheme}
  Q(t+1) = (q(t+1),\delta(t+1)) = \Phi\bigl(q(t), \delta(t), [\Ps^0,\ldots,\Ps(t)], [\Ng^0,\ldots,\Ng(t)]\bigr)
\end{equation}
A relevance feedback scheme is \emph{stable} if, whenever
$\Ps(t)=\Ng(t)=\emptyset$, it is $Q(t_1)=Q(t)$, that is, if 
\begin{equation}
  (q(t),\delta(t)) = \Phi\bigl(q(t), \delta(t), [\Ps^0,\ldots,\Ps(t-1),\emptyset], [\Ng^0,\ldots,\Ng(t-1),\emptyset]\bigr)
\end{equation}
Stability entails that if an iteration doesn't provide any new
information, the query will not change. Note that in practice
stability is not necessarily a desirable property: users tend to
select positive examples more readily than negative ones and, if at
any time the result set doesn't provide any useful sample, the user
might not select anything, leaving the system stuck in an
unsatisfactory answer.

The various models presented in this paper requires us to use a fairly
extended apparatus of symbols. For the convenience of the reader, we
have tried to keep the meaning of the symbols and the span of the
indices as consistent as possible throughout the various models that
we present. The most important symbols and indices used in the rest of
the paper are available at a glance in Table~\ref{symbols}.

\subsection{Life on MARS}
There are compelling reasons to believe that the metric of the feature
space should indeed be affected by relevance feedback. Let us consider
the following idealized case, represented in Figure~\ref{ideal1}: we
have two kinds of images: checkerboards and vertical-striped, of
different densities (total number of stripes) and different colors
(not represented in the figure). We have a feature system with three
axes: one measures color, the second measures line density, and the
third the ratio between the number of horizontal and vertical lines.
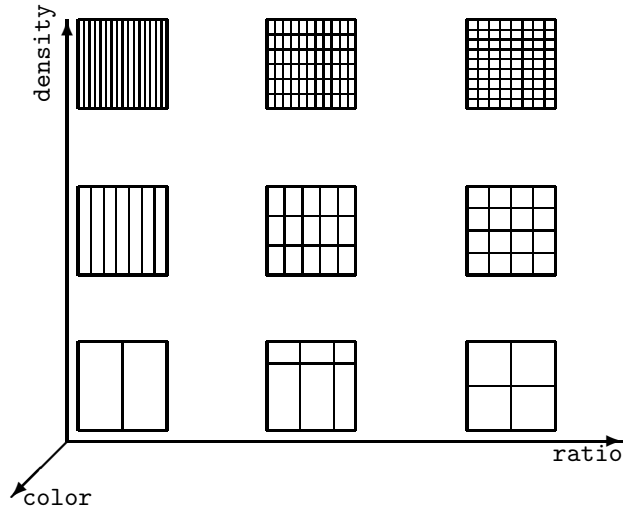
\begin{figure}[htb]
  \begin{center}
    \setlength{\unitlength}{0.4em}
    \begin{picture}(60,45)(-10,-5)
      \thicklines
      \put(0,0){\vector(0,1){38}}
      \put(0,0){\vector(1,0){50}}
      \put(0,0){\vector(-1,-1){5}}
      \put(-1,35){\makebox(0,0)[r]{\begin{turn}{90}density\end{turn}}}
      \put(50,-1){\makebox(0,0)[ru]{ratio}}
      \put(-4,-5){\makebox(0,0)[lu]{color}}
      \thinlines
      \put(1,30){
        \thicklines
        \multiput(0,0)(0,8){2}{\line(1,0){8}}
        \multiput(0,0)(8,0){2}{\line(0,1){8}}
        \thinlines
        \multiput(0.5,0)(0.5,0){15}{\line(0,1){8}}
      }
      \put(18,30){
        \thicklines
        \multiput(0,0)(0,8){2}{\line(1,0){8}}
        \multiput(0,0)(8,0){2}{\line(0,1){8}}
        \thinlines
        \multiput(0.7272,0)(0.7272,0){10}{\line(0,1){8}}
        \multiput(0,1.333)(0,1.333){5}{\line(1,0){8}}
      }
      \put(36,30){
        \thicklines
        \multiput(0,0)(0,8){2}{\line(1,0){8}}
        \multiput(0,0)(8,0){2}{\line(0,1){8}}
        \thinlines
        \multiput(1,0)(1,0){7}{\line(0,1){8}}
        \multiput(0,0.888)(0,0.888){8}{\line(1,0){8}}
      }
      \put(1,15){
        \thicklines
        \multiput(0,0)(0,8){2}{\line(1,0){8}}
        \multiput(0,0)(8,0){2}{\line(0,1){8}}
        \thinlines
        \multiput(1.143,0)(1.143,0){6}{\line(0,1){8}}
      }
      \put(18,15){
        \thicklines
        \multiput(0,0)(0,8){2}{\line(1,0){8}}
        \multiput(0,0)(8,0){2}{\line(0,1){8}}
        \thinlines
        \multiput(1.6,0)(1.6,0){4}{\line(0,1){8}}
        \multiput(0,2.666)(0,2.666){2}{\line(1,0){8}}
      }
      \put(36,15){
        \thicklines
        \multiput(0,0)(0,8){2}{\line(1,0){8}}
        \multiput(0,0)(8,0){2}{\line(0,1){8}}
        \thinlines
        \multiput(2,0)(2,0){3}{\line(0,1){8}}
        \multiput(0,2)(0,2){3}{\line(1,0){8}}
      }
      \put(1,1){
        \thicklines
        \multiput(0,0)(0,8){2}{\line(1,0){8}}
        \multiput(0,0)(8,0){2}{\line(0,1){8}}
        \thinlines
        \put(4,0){\line(0,1){8}}
      }
      \put(18,1){
        \thicklines
        \multiput(0,0)(0,8){2}{\line(1,0){8}}
        \multiput(0,0)(8,0){2}{\line(0,1){8}}
        \thinlines
        \put(3,0){\line(0,1){8}}
        \put(6,0){\line(0,1){8}}
        \put(0,6){\line(1,0){8}}
      }
      \put(36,1){
        \thicklines
        \multiput(0,0)(0,8){2}{\line(1,0){8}}
        \multiput(0,0)(8,0){2}{\line(0,1){8}}
        \thinlines
        \put(4,0){\line(0,1){8}}
        \put(0,4){\line(1,0){8}}
      }
    \end{picture}
  \end{center}
  \caption{An idealized example to illustrate the principle of
    variance weighting. Suppose we are looking for a
    \dqt{checkerboard} of any color and that, upon receiving a set of
    results, we choose only positive examples. The selected items will
    be rather concentrated around the correct value, and the set of
    positive examples will have a small variance on the ratio axis,
    probably a larger one on the density axis (images of very low
    density don't have enough lines to qualify as a
    \dqt{checkerboard}) and very high variance on the color axis
    (color is irrelevant for the query, so the positive samples will
    be of many different colors).}
  \label{ideal1}
\end{figure}
Our goal is to find an image of a regular checkerboard. A typical
user, when shown a sample from the data base, will select images with
approximately the right density, rather regular, and of pretty much
any color. That is, the selected items will be rather concentrated
around the correct value, and the set of positive examples will have a
small variance on the ratio axis (all the positive samples will have a
ratio of approximately 1:1), probably a larger one, but not too large,
on the density axis (images of very low density don't have enough
lines to qualify as a \dqt{checkerboard}) and very high variance on
the color axis (color is irrelevant for the query, so the positive
samples will be of many different colors). The idea of the MARS system
is to use the inverse of the variance of the positive samples along an
axis to measure the \dqt{importance} of that axis, and to use a
weighted Euclidean distance in which each feature is weighted by the
inverse of the variance of the positive examples along that axis. Let
$\Ps(t)=[r_1,\ldots,r_\N]$ be the set of positive examples at
iteration $t$. Build the projection of all the results on the $\di$th
feature axis as
\begin{equation}
  \w_\di = [\w_{1,\di},\ldots, \w_{\N,\di}]'
\end{equation}
and compute the variance 
\begin{equation}
  \label{Mveq}
  \sigma_\di=\mbox{var}(\w_\di). 
\end{equation}
The query point of iteration $t$, $q(t)$ is determined using Rocchio's
algorithm, and the items in the data base are given scores that depend
on the following distance from the query
\begin{equation}
  d^2(\w_{\ai},q(t)) = \Bigl[\prod_{\di=1}^{\M}\sigma_\di \Bigr]^{\frac{1}{\M}}
    \sum_{\di=1}^{\M} \frac{(\w_{\ai,\di}-q_{,\di}(t))^2}{\sigma_{\di}}
\end{equation}
As we mentioned above, only the positive examples are used for the
determination of the metric.

\subsection{MindReader and optimal affine rotations}
The idea of modifying the distance of the feature space to account
for the relevance of each feature has proven to be a good one, but its
execution in MARS has been criticized on two grounds: first, it doesn't
take into account that what is relevant (and therefore has low
variance) might not be the individual features, but some linear
combination of them; second, the weighting criterion looks \emph{ad
  hoc}, and not rigorously justified.

In \cite{ishikawa:98} an example of the first problem is given. The items in a
data base are people represented by two features: their height and
their weight. The query asks for \dqt{mildly overweight} people. The
condition of being mildly overweight is not given by any specific
value of any individual feature. If we consider, with a certain approximation,
that being mildly overweight depends on one's body mass index, then
the relation is $W/H^2=\mbox{const}$, where $W$ is the weight in
kilograms and $H$ the height in meters. So, a typical user might give
a series of positive examples characterized as in figure~\ref{example2}.
\begin{figure}
  \begin{center}
    \setlength{\unitlength}{1.5em}
    \begin{picture}(9,8)(0,0)
      \newsavebox{\crs}
      \savebox{\crs}{
        \thicklines
        \put(0,-0.25){\line(0,1){0.5}}
        \put(-0.25,0){\line(1,0){0.5}}
        \thinlines
      }
      \thicklines
      \put(1,0){\vector(0,1){8}}
      \put(0,1){\vector(1,0){9}}
      \put(0.9,8){\makebox(0,0)[rt]{$W$}}
      \put(9,0.85){\makebox(0,0)[rt]{$H^2$}}
      \thinlines
      \multiput(1,2)(0.25,0.125){30}{\circle*{0.000001}}
      \multiput(7,0)(-0.125,0.25){30}{\circle*{0.000001}}
      \put(8,6){\makebox(0,0)[l]{$a$}}
      \put(3,7.5){\makebox(0,0)[lb]{$b$}}
      \put(2,3){\usebox{\crs}}
      \put(3,2){\usebox{\crs}}
      \put(3,4){\usebox{\crs}}
      \put(4,3){\usebox{\crs}}
      \put(6,3){\usebox{\crs}}
      \put(6,5){\usebox{\crs}}
      \put(7,6){\usebox{\crs}}
      \put(6,7){\usebox{\crs}}
    \end{picture}
  \end{center}
  \caption{Given a space of two features---the square of the height of
    a person and its weight---the positive samples for a \dqt{slightly
      overweight} query would not be represented by any specific value
    of any of the features, but by the relation
    $W/H^2\sim\mbox{const.}$.}
  \label{example2}
\end{figure}
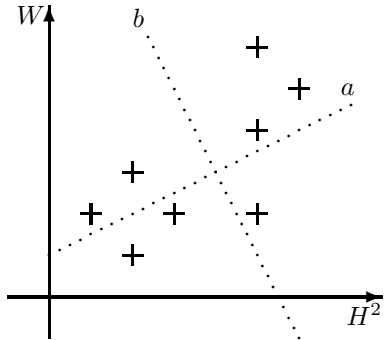
If we consider the features individually, on each one of them the
selected items have high variance, so we would conclude that the
response carries little information. On the other hand, if we rotate
the co\"ordinate system as shown, the variance along the $b$ axis will
be small, indicating that the corresponding weight/height ratio is
relevant.

MindReader \cite{ishikawa:98} takes this into account by considering a more general
distance function between an item and a query, one of the form
\begin{equation}
  D^2(u_\ai,q) = (u_\ai-q)' \mathbf{M}   (u_\ai-q)
\end{equation}
where $\mathbf{M}$ is an $\M\times\M$ symmetric matrix such that
$\mbox{det}(\mathbf{M})=1$ (viz., a rotation). These matrices generate
distance functions in which the iso-distance curves are ellipsoids
centered in $q$ and whose axes can be rotated by varying the
coefficients of $\mathbf{M}$. The matrix $\mathbf{M}$ and the query
point $q$ are determined so as to minimize the sum of the weighted
distances of the positive samples from the query. That is, given the
$\N$ weights $\pi_\ai$ ($\ai=1,\ldots,\N$, $0\le\pi_\ai\le1$), the
matrix $M$ and the vector $q$ are sought that minimize
\begin{equation}
  \sum_{\ai=1}^{\N} \pi_\ai (r_\ai-q)'\mathbf{M}(r_\ai-q)
\end{equation} 
subject to $\mbox{det}(\mathbf{M})=1$. The weights $\pi_\ai$ are
introduced to handle multiple-level scores: the user, during the
interaction, may decide that some of the positive examples are more
significant than others, and assign to them correspondingly higher
weights $\pi_\ai$.

The problem can be solved separately for $q$ and $\mathbf{M}$. The
optimal $q$, independently of $\mathbf{M}$, is the weighted average of
positive samples
\begin{equation}
  \label{qopt}
  q = \frac{1}{\sum_\ai \pi_\ai} \sum_\ai \pi_\ai \w_{\ai}
\end{equation}
with $\w_\ai\in{\mathbb{R}}^{\M}$.

In order to find the optimal $M$, define the weighted correlation
matrix $\mathbf{C}=\{c_{\di\dj}\}$ as
\begin{equation}
  c_{\di\dj} = \sum_\ai \pi_\ai (r_{\ai,\di}-q_\di)(r_{\ai,\dj}-q_\dj) 
\end{equation}
It can be shown that $C=\lambda{\mathbf{M}^{-1}}$ and
$\mbox{det}(\mathbf{C})=\lambda^\M\mbox{det}(\mathbf{M}^{-1})=\lambda^\M$,
where $\lambda$ is the parameter of the Lagrangian optimization. The
optimal $\mathbf{M}$ is \cite{ishikawa:98}:
\begin{equation}
  \label{fooeq}
  \mathbf{M} = \mbox{det}(\mathbf{C})^{\frac{1}{\M}} \mathbf{C}^{-1}
\end{equation}
Note that the matrix $\mathbf{C}$ depends on the query point. The optimal
solution is obtained when $C$ is computed with respect to the optimal
query point (\ref{qopt}). 

\subsection{Dimensionality Problems}
\label{largedim}
The adaptation of the metric works well when the number of positive
examples is at least of the same order of magnitude as the
dimensionality of the feature space. A good example of this is given
by MindReader. The affine matrix $\mathbf{M}$ can be determined using
(\ref{fooeq}) only if $\mathbf{C}$ is non-singular, and this is
not the case whenever $\N<\M$. This is an important case in image
search, as $\N$ is in general of the order of 10 or, in very special
cases, of 100 images, while $\M$ may easily be of the order of
10.000. When $\N<\M$ the inverse $\mathbf{C}^{-1}$ doesn't exist, and
\cite{ishikawa:98} uses in its stead a pseudo-inverse based on singular value
decomposition. $\mathbf{C}$ being symmetric, it can be decomposed as
\begin{equation}
  \mathbf{C} = \mathbf{U\Sigma U}'
\end{equation}
where $\mathbf{U}$ is an $\M\times{\M}$ orthogonal matrix and
\begin{equation}
  \mathbf{\Sigma} = \mbox{diag}\bigl(\sigma_1,\ldots,\sigma_\R,0,\ldots,0)
\end{equation}
where $\R\le{\N}$ is the rank of $\mathbf{C}$. The pseudo-inverse of $\mathbf{\Sigma}$ is defined as
\begin{equation}
  \mathbf{\Sigma}^+ = \mbox{diag}\bigl(1/\sigma_1,\ldots,1/\sigma_\R,0,\ldots,0)
\end{equation}
and that of $\mathbf{C}$ as 
\begin{equation}
  \mathbf{C}^+ = \mathbf{U\Sigma}^+ \mathbf{U}'
\end{equation}
The metric matrix $\mathbf{M}$ is then defined as
$\mathbf{M}=\alpha{\mathbf{C}^+}$, where $\alpha$ is chosen in such a
way that $\mbox{det}(\mathbf{M})=1$.

The whole procedure depends only on $\R<\M$ parameters. To see what
this entails, consider that $\mathbf{U}$ is orthogonal, so the
transformation $x\mapsto{\mathbf{U}x}$ is an isometry. In this
co\"ordinate system, $\mathbf{C}^+=\mathbf{\Sigma}^+$ and
$\mathbf{M}^+=\mathbf{\Sigma}^+=\mbox{diag}\bigl(\sigma_1,\ldots,\sigma_\R,0,\ldots,0)$. That
is, the distance depends on the value of only $\R$ axes, a small
fraction of those of the feature space.

In the general case, this situation is unavoidable: the matrix
$\mathbf{C}$ has $\M(\M-1)/2$ coefficients, and we only have $\M\N$
coefficients of the positive samples that we can use to estimate
them. In image data bases, feature spaces have high dimensionality, and
$\M$ can be of the order of $10^4$. The number of selected images is
limited by practical consideration to an order $10$. So, we are
trying to estimate $\sim10^8$ coefficients using $\sim10^5$ samples---an
obviously under-determined problem.

A system like MARS, on the other hand, only requires the estimation of
$\M$ coefficients, making the estimation using the $\M\N$ feature
values of the positive example stable even for reasonably low values
of $\N$. The price one pays is that the MARS metric matrix can weight
the different axes of the feature space but it can't rotate them,
preventing the method from exploiting statistical regularities on axes
other than the co\"ordinate ones.

In order to alleviate this problem, Rui and Huang \cite{rui:99} propose
dividing the feature space into $\W$ separate components, and determine
the distance between an item and the query by first computing thee
distance for each component and then combining the distances thus
obtained.

More precisely, consider the feature space $\Ft$ as the Cartesian
product of $\W$ manifolds: $\Ft\sim\Ft^1\times\cdots\times\Ft^\W$
and let $\M_\ci$ be the dimensionality of the manifold $\Ft^\ci$. In
the following, the indices $\ci,\cj$ will span $1,\ldots,\W$, while
$\di,\dj$ will span $1,\ldots,\M_\ci$%
\footnote{The construction that we are presenting could be called
  \dqt{top-down}: we have an overall feature space and we break it
  down into smaller, mutually orthogonal pieces. A different,
  \dqt{bottom-up}, point of view would simply ignore the overall
  feature space $\Ft$ and consider that our items are described by
  $\W$ feature vectors $\Ft^1,\ldots,\Ft^\W$. We compute distances
  separately in these spaces and then stitch them together.

  The top-down point of view has the advantage of highlighting
  certain limitations of this decomposition. We are assuming here that
  the $\Ft^\bi$ are independent of each other, and that the
  corresponding feature vectors can vary freely and
  independently. However, if $\Ft=S^{\M-1}$, there is no guarantee that
  the Cartesian combination of independently varying vectors will have
  unit length, that is, there is no guarantee that $\Ft$ is
  decomposable in this way.

  As we shall see, this is not a problem in the model of Rui and
  Huang, as the model of distance combination is very simple, and
  $\Ft$ will not (per se) be a metric space--it will be one only
  \emph{qua} combination of metric spaces. The problem has to be taken
  into account, however, for other combination models.}%
. 

Each item $\w_\ai$ will be described by $\W$ feature vectors
$\w_{\ai\ci}$ with
\begin{equation}
  \w_{\ai\ci} = \bigl[\w_{\ai\ci,1},\ldots,\w_{\ai\ci,\M_\ci}\bigr]'
\end{equation}
In Rui and Huang's model the user, in addition to selecting the set of
$\N$ positive samples, $\Ps$, can give a \emph{relevance} $\pi>0$ to each one
of them. Relevance is modeled as a vector
$\pi=[\pi_1,\ldots,\pi_\N]'$. The overall distance between an item
$\w_\ai$ and the query $q$ is the weighted sum of the distances, in the
component spaces $\Ft^\ci$, between the $\ci$th feature vector of $\w_\ai$
and the projection of $q$ on $\Ft^\ci$, $q_\ci$. That is, if
$w=[w^1,\ldots,w^\W]$ is the weight vector, then
\begin{equation}
  d_\ai = d(\w_\ai,q) = w'g_\ai
\end{equation}
where
\begin{equation}
  \begin{aligned}
    g_\ai &= [g_{\ai1},\ldots,g_{\ai\W}]' \\
    g_{\ai\ci} &= (\w_{\ai\ci} - q_\ci)' \mathbf{M}^\ci (\w_{\ai\ci} - q_\ci)
    \end{aligned}
\end{equation}
where $\mathbf{M}^\ci$ is the symmetric $\M_\ci\times\M_\ci$-dimensional metric
matrix of $\Ft^\ci$. With these definitions the metric optimization
problem is the following:
\begin{equation}
  \min_{\mathbf{M}^\ci,q_\ci,w} \pi' d
\end{equation}
where
\begin{equation}
  \begin{aligned}
    d &= [d_1,\ldots,d_\N]' \\
    d_\ai &= w' g_\ai \\
    g_\ai &= [g_{\ai1},\ldots,g_{\ai\W}]' \\
    g_{\ai\ci} &= (\w_{\ai\ci} - q_\ci)' \mathbf{M}^\ci (\w_{\ai\ci} - q_\ci)
    \end{aligned}
\end{equation}
subject to 
\begin{equation}
  \begin{aligned}
    \sum_\ci \frac{1}{w_\ci} = 1 \\
    \mbox{det}(\mathbf{M}^\ci) = 1
  \end{aligned}
\end{equation}
Defining $\mathbf{R}^\ci$ as the $\M_\ci\times\N$ matrix whose $\ai$th
column is
\begin{equation}
  \w_{\ai\ci}=[\w_{\ai\ci,1}, \ldots, \w_{\ai\ci,\M_\ci}]'
\end{equation}
we have the optimal query point
\begin{equation}
  q_\ci = \frac{\mathbf{R}^\ci\pi}{\sum_\ai\pi_\ai}
\end{equation}
which is the one given by Rocchio's algorithm without negative
examples applied to the $\ci$th feature.

The matrices $\mathbf{M}^\ci$ are determined as in the previous
section, where all the quantities are now limited now to feature
$\ci$. The solution is similar: defining the matrix
$\mathbf{C}^\ci=\{c_{\di\dj}^\ci\}$ as
\begin{equation}
  c_{\di\dj}^\ci = \sum_\ai \pi_\ai (u_{\ai\ci,\di} - q_{\ci,\di})(u_{\ai\ci,\dj} - q_{\ci,\dj})
\end{equation}
we have
\begin{equation}
  \mathbf{M}^\ci = (\mbox{det}(\mathbf{C}^\ci))^{1/\M} (C^\ci)^{-1}
\end{equation}
The optimal weight vector is given by
\begin{equation}
  w_\ci = \frac{1}{\sqrt{a^\ci}} \sum_\cj \sqrt{a^\cj}
\end{equation}
where $a^\ci=\sum_\ai \pi_\ai g_{\ai\ci}$ is the weighted average of
the distance between the positive samples and the query in the
subspace that defines feature $\ci$.

Once the axes have ben rotated, we find here the same general idea
that we find in MARS: the subspace in which the positive samples are
far away from the query are less informative, and will receive a
smaller weight; the average of the square distance along a direction
is related to the variance of the co\"ordinates along that
direction, viz. to (\ref{Mveq}).

\separate

In this section we have limited our considerations to a handful of
systems that are needed to provide the background for our
discussion. We have, in other words, preferred depth of analysis to
breadth of coverage. We should like, however, to give the briefest of
mentions to a few examples more, as a recognition to the pervasiveness
of these techniques. 

In the introduction, in (\ref{berlin}) we have mentioned that the
original query $q^0$ is kept fixed and contributes to the expression
of $q^t$. The extent to which this is done was left unspecified, as
will be in the rest of the paper. The problem is analyzed in
\cite{lv:09}.

Some of the ideas presented in \cite{rui:97} have been extended in
\cite{rui:98b}, while ideas along the lines of the \emph{semantic
  spaces} presented below were presented in \cite{lu:00}. 

This paper focuses on the use of relevance feedback in image search
but, of course, the general ideas have been applied to many areas,
from information retrieval in collection of documents
\cite{harman:92,buckley:94}, web systems \cite{yu:03} and other
heterogeneous data \cite{keogh:98}. Relevance feedback is present in a
number of methods and algorithms; in \cite{deveaud:18}, user feedback
is used in order to set system parameters, in \cite{niu:19} in order
to understand user behavior in faceted searches. User expectation and
feedback has also been used in order to measure the effeciveness of
systems \cite{moffat:17}.

For more general information, the reader should consult the many
excellent books
\cite{baezayates:11,buettcher:10,rijsbergen:04,croft:10,manning:08}
and reviews \cite{broglio:94,salton:97,zhou:03,pao:89} on Information
Retrieval. Information on the application of Relevance Feedback to
image retrieval can be found in general texts and review on this areas
\cite{santini:01a,santini:00d,jain:96,enser:95,frakes:92}.

\section{Semantic Spaces}
Rui \& Huang try to solve the problems deriving from the extremely
high dimensionality of the feature space by breaking the space in a
two-level hierarchy. At the lower level, each feature defines a
separate space, upon which one operates as in the usual case and in
which one computes a distance from the query. At the higher level,
these distances are linearly combined to provide the final
distance. One useful point of view, one that isn't explored in Rui \&
Huang's paper, is to consider the latter as a \emph{higher order
  feature space}, one in which the co\"ordinates of the images are
given by their distances from the query in each of the low level
spaces. Since all co\"ordinates are positive (and therefore equal to
their absolute value), we can consider their linear combination as a
weighted $L_1$ distance. Note that this choice of $L_1$ distance is
essential to Rui \& Huang's method, since the combination function of
the different feature spaces must be linear. Consequently, it is
impossible to use such method to endow the high level space with any
other metric.

In addition to this, Rui \& Huang's method doesn't always succeeds in
reducing the size of the problem to a manageable size, as the matrices
$\mathbf{M}^\ci$ can still be very large. Figure~\ref{sizealot} shows the
pre-computed features of the SUN data set \cite{xiao:10} together with their
size.
\begin{figure}
  \begin{center}
    \begin{tabular}{|l|r|l}
      \cline{1-2}
      \textbf{Feature} & \textbf{Size} & \\
      \cline{1-2}
      \cline{1-2}
      Dense SIFT & 784 & \\
      \cline{1-2}
      Geo color & 784 & \\
      \cline{1-2}
      Geo map $8\times{8}$ & 256 & \\
      \cline{1-2}
      Geo texton & 512 & \\
      \cline{1-2}
      Gist & 512 & \\
      \cline{1-2}
      Gist padding & 512 & \\
      \cline{1-2}
      Hog $2\times{2}$ & 6,300 & \\
      \cline{1-2}
      Lbp & 798 & \\
      \cline{1-2}
      Lbphf & 1239 & \\
      \cline{1-2}
      Line hist & 230 & \\
      \cline{1-2}
      Texton & 10,752 & \\
      \cline{1-2}
      Tiny image & 3,072 & \\
      \hline 
      \cline{2-2}
      \multicolumn{1}{c||}{}   & \textbf{25,751}  & \multicolumn{1}{||c|}{Total} \\
      \cline{2-2}
      \cline{2-3}
    \end{tabular}
  \end{center}
  \caption{The individual features of the SUN data set with their
    individual and their total size.}
  \label{sizealot}
\end{figure}
The metric matrix for the whole feature space (of size $25,751$)
contains a whopping $10^9$ coefficients.  Breaking up the feature
space using Rui \& Huang's method alleviates the dimensionality
problem, but doesn't quite solve it, as the total size of the matrices
$\mathbf{M}^\ci$ is $10^8$ coefficients: an order-of-magnitude
improvement but, still, a problem too large for many applications.

\subsection{Semantic partition}
In this section, we shall take the essential idea of Rui \& Huang, but
we shall apply it to the point of view that we just expressed, that
is, to define a higher order feature space.  As in Rui \& Huang's
work, we shall assume that the total feature space is the Cartesian
composition of $\W$ feature spaces
\begin{equation}
  \label{spacedecomp}
  \Ft\sim\Ft^1\times\cdots\times\Ft^{\W}
\end{equation}
but, in this case, we shall assume that the spaces
$\Ft_1,\ldots,\Ft_{\W}$ are not simply a partition dictated by
technical matters, that is, they don't simply segregate the components
of different features, but have semantic relevance: the spaces
$\Ft^\ci$ must somehow correlate with the semantic characteristics
that a person would use when doing a relevance judgment
\cite{harter:92,barry:94}.

The division of the feature space into sub-spaces entails a semantic
choice of the designer, as each subspace should correlate with an
aspect of the \dqt{meaning} of the image.

How are these groups to be selected? The general idea is that they be
\dqt{meaningful}, and meaning is often assigned through language
\cite{greimas:66,carnap:50,geckeler:76,hodge:88,santini:07}. The
psychology of feedback selection is still somewhat unexplored, but it
is certain that if we ask somebody why he chose certain images as
positive example, we shall receive a linguistic answer
\cite{fairclough:09}. For lack of a better theory, we can assume that
this answer is a reflection of the perception that made a person
choose these images. Consequently, a dimension in the feature space
should be something that we can easily describe in words (in practice:
in a simple and direct sentence) without making reference to the
underlying technical feature.

One possibility, which we shall not analyze in this paper, is that of
a \emph{prosemantic space}, in which each dimension in the reduced
dimensionality space is the output of a classifier, trained to
recognize a specific category of images
\cite{santini:11a,santini:12g}. Here, we shall consider the individual
feature spaces as given, and use the query to transform each one into a
dimension of the semantic space.

\subsection{The query space}
Assume, according to our model, that we have a feature space defined
as the Cartesian composition of $\W$ feature spaces, as in
(\ref{spacedecomp}), and let $h_\ci$ be the dimensionality of the
$\ci$th of such spaces.  Each item $\w_\ai$ will be described by the
$\W$ feature vectors
\begin{equation}
  \w_{\ai\ci} = [\w_{\ai\ci,1}, \ldots, \w_{\ai\ci,\M_{\ci}}]'
\end{equation}
The query $q$ will also be defined by $\W$ vectors
\begin{equation}
  q_\ci = [q_{\ci,1}, \ldots, q_{\ci,\M_\ci}]'
\end{equation}
Each feature space $\Ft^\ci$ is a metric space, endowed with a
distance function $d_\ci$. We consider the distance between the
$\ci$th feature of item $\w_\ai$ and the $\ci$th component of the query
$q$ as the co\"ordinate of $\w_\ai$ along the $\ci$th dimension of the
query space, that is, we represent $\w_\ai$ with the $\W$-dimensional
feature vector
\begin{equation}
  \label{qspcoord}
  \begin{aligned}
  \bar{\w}_\ai &= [\bar{\w}_{\ai,1},\ldots,\bar{\w}_{\ai,\W}]' \\
               &= [d_1(\w_{\ai1},q_1), \ldots, d_\W(\w_{\ai\W},q_W)]'
  \end{aligned}
\end{equation}
The space of these vectors is the \emph{query space} $\Qt$
of dimensionality $\W$. The query space itself can be given a metric
structure defining a distance function in it. If the distance is a
weighted Minkowski distance, we have
\begin{equation}
  \label{bemine}
  d(\bar{u},\bar{v}) = \bigl[ \sum_\di w_\di (\bar{u}_\di-\bar{v}_\di)^p \bigr]^{\frac{1}{p}} 
\end{equation}
Note that in this space the query is always the origin of the
co\"ordinate system, so that the score of an image is a function of
its distance from the origin.

In this space, all the co\"ordinates are positive and, depending on
the characteristics of the distance functions, they can be bounded. In
later sections, we shall use suitable probability densities to model
the distribution of images in this space. One reasonable model for
many situation in which the co\"ordinates are positive is the
logonormal distribution, that is, a normal distribution of the
logarithm of the co\"ordinates \cite{crow:87,gaddum:45}. To this end,
sometimes we shall use the transformed query space $\bar{\Qt}$, in
which the co\"ordinates of the $\ai$th image are
\begin{equation}
  \label{daisy}
  \tilde{\w}_\ai = [\tilde{\w_{\ai,1}},\ldots,\tilde{\w_{\ai,\W}}]' = 
  [\log\bar{\w}_{\ai,1},\ldots,\log\bar{\w}_{\ai,\W}]'
\end{equation}
The distance in this space is defined as in (\ref{bemine}).

Several arguments have been brought forth to argue that spaces of
this kind are more \dqt{semantic} than normal feature spaces, in the
sense that they correlate better with the linguistic descriptions of
images or, at the very least, they are more amenable to a linguistic
description than the usual feature spaces.

We shall appropriate these arguments and assume that the query space
is the most suitable space in which relevance feedback should be
implemented in the sense that, its reduced dimensionality
notwithstanding, the query space contains the essential (semantic)
information on which Relevance Feedback is based.  In our tests
section we shall validate this assumption by comparing the performance
of the MARS algorithm in the query space with that of the same
algorithm in the feature space.

Note that in all our tests the query vectors $q_\ci$ are obtained by
applying Rocchio's algorithm to the individual feature spaces.  For
this reason, it is not possible to implement Rocchio's algorithm in
the query space, as the algorithm is necessary in order to build
it.

\section{Riemann Relevance Feedback}
\label{riemann}
Relevance feedback begins by placing a number of positive examples in
a metric space which, in this case, is the query space of the
previous section. We can consider these images as samples from a
probability distribution that determines the probability that a
certain region of the spaces contain semantically interesting images.

To be more precise, consider the problem of using relevance feedback
to identify a target image $\w$ in the query space, and let $p$ be
a probability density on $\Ft$. Then $p$ models the semantics of
$\w$ if, given a volume $\Delta{V}$ around a point $x\in\Ft$, the
probability that $\w\in\Delta{V}$ is $p(x)\Delta{V}$.

The idea of our method is to use this distribution to model the
feedback process as a deformation in the metric of the query space. In
particular, we shall use this distribution to determine a Riemann
metric in $\Ft$ such that images that differ in a significant area of
the space will be fairly different, while images that differ in a
non-significant area of the space won't be as different. To clarify
things, consider a one-dimensional query space and a distribution like
that of figure~\ref{distA}.
\begin{figure}
\begin{center}
\setlength{\unitlength}{1.9em}
\begin{picture}(10,11)(0,-1)
\multiput(0,0)(0,8){2}{\line(1,0){10}}
\multiput(0,0)(10,0){2}{\line(0,1){8}}
\multiput(5,-2)(0,0.5){22}{\line(0,1){0.25}}
\put(4.9,8.1){\makebox(0,0)[rb]{query point}}
\put(0.000000,0.000090){\circle*{0.000001}}
\put(0.025063,0.000100){\circle*{0.000001}}
\put(0.050125,0.000112){\circle*{0.000001}}
\put(0.075188,0.000125){\circle*{0.000001}}
\put(0.100251,0.000139){\circle*{0.000001}}
\put(0.125313,0.000155){\circle*{0.000001}}
\put(0.150376,0.000173){\circle*{0.000001}}
\put(0.175439,0.000193){\circle*{0.000001}}
\put(0.200501,0.000215){\circle*{0.000001}}
\put(0.225564,0.000239){\circle*{0.000001}}
\put(0.250627,0.000266){\circle*{0.000001}}
\put(0.275689,0.000295){\circle*{0.000001}}
\put(0.300752,0.000328){\circle*{0.000001}}
\put(0.325815,0.000364){\circle*{0.000001}}
\put(0.350877,0.000404){\circle*{0.000001}}
\put(0.375940,0.000448){\circle*{0.000001}}
\put(0.401003,0.000496){\circle*{0.000001}}
\put(0.426065,0.000550){\circle*{0.000001}}
\put(0.451128,0.000608){\circle*{0.000001}}
\put(0.476190,0.000673){\circle*{0.000001}}
\put(0.501253,0.000744){\circle*{0.000001}}
\put(0.526316,0.000822){\circle*{0.000001}}
\put(0.551378,0.000908){\circle*{0.000001}}
\put(0.576441,0.001003){\circle*{0.000001}}
\put(0.601504,0.001106){\circle*{0.000001}}
\put(0.626566,0.001220){\circle*{0.000001}}
\put(0.651629,0.001344){\circle*{0.000001}}
\put(0.676692,0.001481){\circle*{0.000001}}
\put(0.701754,0.001630){\circle*{0.000001}}
\put(0.726817,0.001793){\circle*{0.000001}}
\put(0.751880,0.001972){\circle*{0.000001}}
\put(0.776942,0.002167){\circle*{0.000001}}
\put(0.802005,0.002380){\circle*{0.000001}}
\put(0.827068,0.002612){\circle*{0.000001}}
\put(0.852130,0.002866){\circle*{0.000001}}
\put(0.877193,0.003143){\circle*{0.000001}}
\put(0.902256,0.003444){\circle*{0.000001}}
\put(0.927318,0.003772){\circle*{0.000001}}
\put(0.952381,0.004129){\circle*{0.000001}}
\put(0.977444,0.004518){\circle*{0.000001}}
\put(1.002506,0.004940){\circle*{0.000001}}
\put(1.027569,0.005398){\circle*{0.000001}}
\put(1.052632,0.005896){\circle*{0.000001}}
\put(1.077694,0.006436){\circle*{0.000001}}
\put(1.102757,0.007022){\circle*{0.000001}}
\put(1.127820,0.007657){\circle*{0.000001}}
\put(1.152882,0.008344){\circle*{0.000001}}
\put(1.177945,0.009089){\circle*{0.000001}}
\put(1.203008,0.009894){\circle*{0.000001}}
\put(1.228070,0.010764){\circle*{0.000001}}
\put(1.253133,0.011704){\circle*{0.000001}}
\put(1.278195,0.012720){\circle*{0.000001}}
\put(1.303258,0.013815){\circle*{0.000001}}
\put(1.328321,0.014997){\circle*{0.000001}}
\put(1.353383,0.016271){\circle*{0.000001}}
\put(1.378446,0.017643){\circle*{0.000001}}
\put(1.403509,0.019120){\circle*{0.000001}}
\put(1.428571,0.020709){\circle*{0.000001}}
\put(1.453634,0.022418){\circle*{0.000001}}
\put(1.478697,0.024254){\circle*{0.000001}}
\put(1.503759,0.026226){\circle*{0.000001}}
\put(1.528822,0.028343){\circle*{0.000001}}
\put(1.553885,0.030613){\circle*{0.000001}}
\put(1.578947,0.033046){\circle*{0.000001}}
\put(1.604010,0.035654){\circle*{0.000001}}
\put(1.629073,0.038445){\circle*{0.000001}}
\put(1.654135,0.041432){\circle*{0.000001}}
\put(1.679198,0.044625){\circle*{0.000001}}
\put(1.704261,0.048039){\circle*{0.000001}}
\put(1.729323,0.051684){\circle*{0.000001}}
\put(1.754386,0.055575){\circle*{0.000001}}
\put(1.779449,0.059725){\circle*{0.000001}}
\put(1.804511,0.064150){\circle*{0.000001}}
\put(1.829574,0.068864){\circle*{0.000001}}
\put(1.854637,0.073883){\circle*{0.000001}}
\put(1.879699,0.079224){\circle*{0.000001}}
\put(1.904762,0.084903){\circle*{0.000001}}
\put(1.929825,0.090939){\circle*{0.000001}}
\put(1.954887,0.097350){\circle*{0.000001}}
\put(1.979950,0.104154){\circle*{0.000001}}
\put(2.005013,0.111371){\circle*{0.000001}}
\put(2.030075,0.119023){\circle*{0.000001}}
\put(2.055138,0.127128){\circle*{0.000001}}
\put(2.080201,0.135711){\circle*{0.000001}}
\put(2.105263,0.144791){\circle*{0.000001}}
\put(2.130326,0.154393){\circle*{0.000001}}
\put(2.155388,0.164540){\circle*{0.000001}}
\put(2.180451,0.175256){\circle*{0.000001}}
\put(2.205514,0.186565){\circle*{0.000001}}
\put(2.230576,0.198493){\circle*{0.000001}}
\put(2.255639,0.211067){\circle*{0.000001}}
\put(2.280702,0.224311){\circle*{0.000001}}
\put(2.305764,0.238253){\circle*{0.000001}}
\put(2.330827,0.252921){\circle*{0.000001}}
\put(2.355890,0.268342){\circle*{0.000001}}
\put(2.380952,0.284544){\circle*{0.000001}}
\put(2.406015,0.301556){\circle*{0.000001}}
\put(2.431078,0.319407){\circle*{0.000001}}
\put(2.456140,0.338125){\circle*{0.000001}}
\put(2.481203,0.357741){\circle*{0.000001}}
\put(2.506266,0.378283){\circle*{0.000001}}
\put(2.531328,0.399782){\circle*{0.000001}}
\put(2.556391,0.422267){\circle*{0.000001}}
\put(2.581454,0.445767){\circle*{0.000001}}
\put(2.606516,0.470312){\circle*{0.000001}}
\put(2.631579,0.495933){\circle*{0.000001}}
\put(2.656642,0.522656){\circle*{0.000001}}
\put(2.681704,0.550513){\circle*{0.000001}}
\put(2.706767,0.579530){\circle*{0.000001}}
\put(2.731830,0.609737){\circle*{0.000001}}
\put(2.756892,0.641159){\circle*{0.000001}}
\put(2.781955,0.673825){\circle*{0.000001}}
\put(2.807018,0.707760){\circle*{0.000001}}
\put(2.832080,0.742989){\circle*{0.000001}}
\put(2.857143,0.779536){\circle*{0.000001}}
\put(2.882206,0.817424){\circle*{0.000001}}
\put(2.907268,0.856675){\circle*{0.000001}}
\put(2.932331,0.897310){\circle*{0.000001}}
\put(2.957393,0.939347){\circle*{0.000001}}
\put(2.982456,0.982805){\circle*{0.000001}}
\put(3.007519,1.027700){\circle*{0.000001}}
\put(3.032581,1.074046){\circle*{0.000001}}
\put(3.057644,1.121855){\circle*{0.000001}}
\put(3.082707,1.171138){\circle*{0.000001}}
\put(3.107769,1.221904){\circle*{0.000001}}
\put(3.132832,1.274158){\circle*{0.000001}}
\put(3.157895,1.327906){\circle*{0.000001}}
\put(3.182957,1.383148){\circle*{0.000001}}
\put(3.208020,1.439885){\circle*{0.000001}}
\put(3.233083,1.498112){\circle*{0.000001}}
\put(3.258145,1.557823){\circle*{0.000001}}
\put(3.283208,1.619010){\circle*{0.000001}}
\put(3.308271,1.681662){\circle*{0.000001}}
\put(3.333333,1.745763){\circle*{0.000001}}
\put(3.358396,1.811295){\circle*{0.000001}}
\put(3.383459,1.878239){\circle*{0.000001}}
\put(3.408521,1.946570){\circle*{0.000001}}
\put(3.433584,2.016260){\circle*{0.000001}}
\put(3.458647,2.087280){\circle*{0.000001}}
\put(3.483709,2.159595){\circle*{0.000001}}
\put(3.508772,2.233168){\circle*{0.000001}}
\put(3.533835,2.307959){\circle*{0.000001}}
\put(3.558897,2.383923){\circle*{0.000001}}
\put(3.583960,2.461013){\circle*{0.000001}}
\put(3.609023,2.539178){\circle*{0.000001}}
\put(3.634085,2.618363){\circle*{0.000001}}
\put(3.659148,2.698510){\circle*{0.000001}}
\put(3.684211,2.779558){\circle*{0.000001}}
\put(3.709273,2.861442){\circle*{0.000001}}
\put(3.734336,2.944095){\circle*{0.000001}}
\put(3.759398,3.027444){\circle*{0.000001}}
\put(3.784461,3.111414){\circle*{0.000001}}
\put(3.809524,3.195929){\circle*{0.000001}}
\put(3.834586,3.280907){\circle*{0.000001}}
\put(3.859649,3.366265){\circle*{0.000001}}
\put(3.884712,3.451916){\circle*{0.000001}}
\put(3.909774,3.537770){\circle*{0.000001}}
\put(3.934837,3.623735){\circle*{0.000001}}
\put(3.959900,3.709717){\circle*{0.000001}}
\put(3.984962,3.795620){\circle*{0.000001}}
\put(4.010025,3.881344){\circle*{0.000001}}
\put(4.035088,3.966789){\circle*{0.000001}}
\put(4.060150,4.051852){\circle*{0.000001}}
\put(4.085213,4.136428){\circle*{0.000001}}
\put(4.110276,4.220413){\circle*{0.000001}}
\put(4.135338,4.303700){\circle*{0.000001}}
\put(4.160401,4.386181){\circle*{0.000001}}
\put(4.185464,4.467747){\circle*{0.000001}}
\put(4.210526,4.548289){\circle*{0.000001}}
\put(4.235589,4.627699){\circle*{0.000001}}
\put(4.260652,4.705867){\circle*{0.000001}}
\put(4.285714,4.782685){\circle*{0.000001}}
\put(4.310777,4.858043){\circle*{0.000001}}
\put(4.335840,4.931834){\circle*{0.000001}}
\put(4.360902,5.003952){\circle*{0.000001}}
\put(4.385965,5.074290){\circle*{0.000001}}
\put(4.411028,5.142744){\circle*{0.000001}}
\put(4.436090,5.209213){\circle*{0.000001}}
\put(4.461153,5.273595){\circle*{0.000001}}
\put(4.486216,5.335793){\circle*{0.000001}}
\put(4.511278,5.395711){\circle*{0.000001}}
\put(4.536341,5.453256){\circle*{0.000001}}
\put(4.561404,5.508339){\circle*{0.000001}}
\put(4.586466,5.560872){\circle*{0.000001}}
\put(4.611529,5.610773){\circle*{0.000001}}
\put(4.636591,5.657961){\circle*{0.000001}}
\put(4.661654,5.702362){\circle*{0.000001}}
\put(4.686717,5.743903){\circle*{0.000001}}
\put(4.711779,5.782517){\circle*{0.000001}}
\put(4.736842,5.818141){\circle*{0.000001}}
\put(4.761905,5.850717){\circle*{0.000001}}
\put(4.786967,5.880191){\circle*{0.000001}}
\put(4.812030,5.906515){\circle*{0.000001}}
\put(4.837093,5.929646){\circle*{0.000001}}
\put(4.862155,5.949544){\circle*{0.000001}}
\put(4.887218,5.966176){\circle*{0.000001}}
\put(4.912281,5.979516){\circle*{0.000001}}
\put(4.937343,5.989540){\circle*{0.000001}}
\put(4.962406,5.996232){\circle*{0.000001}}
\put(4.987469,5.999581){\circle*{0.000001}}
\put(5.012531,5.999581){\circle*{0.000001}}
\put(5.037594,5.996232){\circle*{0.000001}}
\put(5.062657,5.989540){\circle*{0.000001}}
\put(5.087719,5.979516){\circle*{0.000001}}
\put(5.112782,5.966176){\circle*{0.000001}}
\put(5.137845,5.949544){\circle*{0.000001}}
\put(5.162907,5.929646){\circle*{0.000001}}
\put(5.187970,5.906515){\circle*{0.000001}}
\put(5.213033,5.880191){\circle*{0.000001}}
\put(5.238095,5.850717){\circle*{0.000001}}
\put(5.263158,5.818141){\circle*{0.000001}}
\put(5.288221,5.782517){\circle*{0.000001}}
\put(5.313283,5.743903){\circle*{0.000001}}
\put(5.338346,5.702362){\circle*{0.000001}}
\put(5.363409,5.657961){\circle*{0.000001}}
\put(5.388471,5.610773){\circle*{0.000001}}
\put(5.413534,5.560872){\circle*{0.000001}}
\put(5.438596,5.508339){\circle*{0.000001}}
\put(5.463659,5.453256){\circle*{0.000001}}
\put(5.488722,5.395711){\circle*{0.000001}}
\put(5.513784,5.335793){\circle*{0.000001}}
\put(5.538847,5.273595){\circle*{0.000001}}
\put(5.563910,5.209213){\circle*{0.000001}}
\put(5.588972,5.142744){\circle*{0.000001}}
\put(5.614035,5.074290){\circle*{0.000001}}
\put(5.639098,5.003952){\circle*{0.000001}}
\put(5.664160,4.931834){\circle*{0.000001}}
\put(5.689223,4.858043){\circle*{0.000001}}
\put(5.714286,4.782685){\circle*{0.000001}}
\put(5.739348,4.705867){\circle*{0.000001}}
\put(5.764411,4.627699){\circle*{0.000001}}
\put(5.789474,4.548289){\circle*{0.000001}}
\put(5.814536,4.467747){\circle*{0.000001}}
\put(5.839599,4.386181){\circle*{0.000001}}
\put(5.864662,4.303700){\circle*{0.000001}}
\put(5.889724,4.220413){\circle*{0.000001}}
\put(5.914787,4.136428){\circle*{0.000001}}
\put(5.939850,4.051852){\circle*{0.000001}}
\put(5.964912,3.966789){\circle*{0.000001}}
\put(5.989975,3.881344){\circle*{0.000001}}
\put(6.015038,3.795620){\circle*{0.000001}}
\put(6.040100,3.709717){\circle*{0.000001}}
\put(6.065163,3.623735){\circle*{0.000001}}
\put(6.090226,3.537770){\circle*{0.000001}}
\put(6.115288,3.451916){\circle*{0.000001}}
\put(6.140351,3.366265){\circle*{0.000001}}
\put(6.165414,3.280907){\circle*{0.000001}}
\put(6.190476,3.195929){\circle*{0.000001}}
\put(6.215539,3.111414){\circle*{0.000001}}
\put(6.240602,3.027444){\circle*{0.000001}}
\put(6.265664,2.944095){\circle*{0.000001}}
\put(6.290727,2.861442){\circle*{0.000001}}
\put(6.315789,2.779558){\circle*{0.000001}}
\put(6.340852,2.698510){\circle*{0.000001}}
\put(6.365915,2.618363){\circle*{0.000001}}
\put(6.390977,2.539178){\circle*{0.000001}}
\put(6.416040,2.461013){\circle*{0.000001}}
\put(6.441103,2.383923){\circle*{0.000001}}
\put(6.466165,2.307959){\circle*{0.000001}}
\put(6.491228,2.233168){\circle*{0.000001}}
\put(6.516291,2.159595){\circle*{0.000001}}
\put(6.541353,2.087280){\circle*{0.000001}}
\put(6.566416,2.016260){\circle*{0.000001}}
\put(6.591479,1.946570){\circle*{0.000001}}
\put(6.616541,1.878239){\circle*{0.000001}}
\put(6.641604,1.811295){\circle*{0.000001}}
\put(6.666667,1.745763){\circle*{0.000001}}
\put(6.691729,1.681662){\circle*{0.000001}}
\put(6.716792,1.619010){\circle*{0.000001}}
\put(6.741855,1.557823){\circle*{0.000001}}
\put(6.766917,1.498112){\circle*{0.000001}}
\put(6.791980,1.439885){\circle*{0.000001}}
\put(6.817043,1.383148){\circle*{0.000001}}
\put(6.842105,1.327906){\circle*{0.000001}}
\put(6.867168,1.274158){\circle*{0.000001}}
\put(6.892231,1.221904){\circle*{0.000001}}
\put(6.917293,1.171138){\circle*{0.000001}}
\put(6.942356,1.121855){\circle*{0.000001}}
\put(6.967419,1.074046){\circle*{0.000001}}
\put(6.992481,1.027700){\circle*{0.000001}}
\put(7.017544,0.982805){\circle*{0.000001}}
\put(7.042607,0.939347){\circle*{0.000001}}
\put(7.067669,0.897310){\circle*{0.000001}}
\put(7.092732,0.856675){\circle*{0.000001}}
\put(7.117794,0.817424){\circle*{0.000001}}
\put(7.142857,0.779536){\circle*{0.000001}}
\put(7.167920,0.742989){\circle*{0.000001}}
\put(7.192982,0.707760){\circle*{0.000001}}
\put(7.218045,0.673825){\circle*{0.000001}}
\put(7.243108,0.641159){\circle*{0.000001}}
\put(7.268170,0.609737){\circle*{0.000001}}
\put(7.293233,0.579530){\circle*{0.000001}}
\put(7.318296,0.550513){\circle*{0.000001}}
\put(7.343358,0.522656){\circle*{0.000001}}
\put(7.368421,0.495933){\circle*{0.000001}}
\put(7.393484,0.470312){\circle*{0.000001}}
\put(7.418546,0.445767){\circle*{0.000001}}
\put(7.443609,0.422267){\circle*{0.000001}}
\put(7.468672,0.399782){\circle*{0.000001}}
\put(7.493734,0.378283){\circle*{0.000001}}
\put(7.518797,0.357741){\circle*{0.000001}}
\put(7.543860,0.338125){\circle*{0.000001}}
\put(7.568922,0.319407){\circle*{0.000001}}
\put(7.593985,0.301556){\circle*{0.000001}}
\put(7.619048,0.284544){\circle*{0.000001}}
\put(7.644110,0.268342){\circle*{0.000001}}
\put(7.669173,0.252921){\circle*{0.000001}}
\put(7.694236,0.238253){\circle*{0.000001}}
\put(7.719298,0.224311){\circle*{0.000001}}
\put(7.744361,0.211067){\circle*{0.000001}}
\put(7.769424,0.198493){\circle*{0.000001}}
\put(7.794486,0.186565){\circle*{0.000001}}
\put(7.819549,0.175256){\circle*{0.000001}}
\put(7.844612,0.164540){\circle*{0.000001}}
\put(7.869674,0.154393){\circle*{0.000001}}
\put(7.894737,0.144791){\circle*{0.000001}}
\put(7.919799,0.135711){\circle*{0.000001}}
\put(7.944862,0.127128){\circle*{0.000001}}
\put(7.969925,0.119023){\circle*{0.000001}}
\put(7.994987,0.111371){\circle*{0.000001}}
\put(8.020050,0.104154){\circle*{0.000001}}
\put(8.045113,0.097350){\circle*{0.000001}}
\put(8.070175,0.090939){\circle*{0.000001}}
\put(8.095238,0.084903){\circle*{0.000001}}
\put(8.120301,0.079224){\circle*{0.000001}}
\put(8.145363,0.073883){\circle*{0.000001}}
\put(8.170426,0.068864){\circle*{0.000001}}
\put(8.195489,0.064150){\circle*{0.000001}}
\put(8.220551,0.059725){\circle*{0.000001}}
\put(8.245614,0.055575){\circle*{0.000001}}
\put(8.270677,0.051684){\circle*{0.000001}}
\put(8.295739,0.048039){\circle*{0.000001}}
\put(8.320802,0.044625){\circle*{0.000001}}
\put(8.345865,0.041432){\circle*{0.000001}}
\put(8.370927,0.038445){\circle*{0.000001}}
\put(8.395990,0.035654){\circle*{0.000001}}
\put(8.421053,0.033046){\circle*{0.000001}}
\put(8.446115,0.030613){\circle*{0.000001}}
\put(8.471178,0.028343){\circle*{0.000001}}
\put(8.496241,0.026226){\circle*{0.000001}}
\put(8.521303,0.024254){\circle*{0.000001}}
\put(8.546366,0.022418){\circle*{0.000001}}
\put(8.571429,0.020709){\circle*{0.000001}}
\put(8.596491,0.019120){\circle*{0.000001}}
\put(8.621554,0.017643){\circle*{0.000001}}
\put(8.646617,0.016271){\circle*{0.000001}}
\put(8.671679,0.014997){\circle*{0.000001}}
\put(8.696742,0.013815){\circle*{0.000001}}
\put(8.721805,0.012720){\circle*{0.000001}}
\put(8.746867,0.011704){\circle*{0.000001}}
\put(8.771930,0.010764){\circle*{0.000001}}
\put(8.796992,0.009894){\circle*{0.000001}}
\put(8.822055,0.009089){\circle*{0.000001}}
\put(8.847118,0.008344){\circle*{0.000001}}
\put(8.872180,0.007657){\circle*{0.000001}}
\put(8.897243,0.007022){\circle*{0.000001}}
\put(8.922306,0.006436){\circle*{0.000001}}
\put(8.947368,0.005896){\circle*{0.000001}}
\put(8.972431,0.005398){\circle*{0.000001}}
\put(8.997494,0.004940){\circle*{0.000001}}
\put(9.022556,0.004518){\circle*{0.000001}}
\put(9.047619,0.004129){\circle*{0.000001}}
\put(9.072682,0.003772){\circle*{0.000001}}
\put(9.097744,0.003444){\circle*{0.000001}}
\put(9.122807,0.003143){\circle*{0.000001}}
\put(9.147870,0.002866){\circle*{0.000001}}
\put(9.172932,0.002612){\circle*{0.000001}}
\put(9.197995,0.002380){\circle*{0.000001}}
\put(9.223058,0.002167){\circle*{0.000001}}
\put(9.248120,0.001972){\circle*{0.000001}}
\put(9.273183,0.001793){\circle*{0.000001}}
\put(9.298246,0.001630){\circle*{0.000001}}
\put(9.323308,0.001481){\circle*{0.000001}}
\put(9.348371,0.001344){\circle*{0.000001}}
\put(9.373434,0.001220){\circle*{0.000001}}
\put(9.398496,0.001106){\circle*{0.000001}}
\put(9.423559,0.001003){\circle*{0.000001}}
\put(9.448622,0.000908){\circle*{0.000001}}
\put(9.473684,0.000822){\circle*{0.000001}}
\put(9.498747,0.000744){\circle*{0.000001}}
\put(9.523810,0.000673){\circle*{0.000001}}
\put(9.548872,0.000608){\circle*{0.000001}}
\put(9.573935,0.000550){\circle*{0.000001}}
\put(9.598997,0.000496){\circle*{0.000001}}
\put(9.624060,0.000448){\circle*{0.000001}}
\put(9.649123,0.000404){\circle*{0.000001}}
\put(9.674185,0.000364){\circle*{0.000001}}
\put(9.699248,0.000328){\circle*{0.000001}}
\put(9.724311,0.000295){\circle*{0.000001}}
\put(9.749373,0.000266){\circle*{0.000001}}
\put(9.774436,0.000239){\circle*{0.000001}}
\put(9.799499,0.000215){\circle*{0.000001}}
\put(9.824561,0.000193){\circle*{0.000001}}
\put(9.849624,0.000173){\circle*{0.000001}}
\put(9.874687,0.000155){\circle*{0.000001}}
\put(9.899749,0.000139){\circle*{0.000001}}
\put(9.924812,0.000125){\circle*{0.000001}}
\put(9.949875,0.000112){\circle*{0.000001}}
\put(9.974937,0.000100){\circle*{0.000001}}
\put(10.000000,0.000090){\circle*{0.000001}}
\put(6.050000,-1){\line(0,1){6.5}}
\put(3.950000,-1){\line(0,1){6.5}}
\put(3.500000,-0.5){\line(1,0){3.000000}}
\put(5.1,-0.7){\makebox(0,0)[lt]{$\Delta$}}
\put(8,9){\makebox(0,0)[lb]{High density:}}
\put(8,8.5){\makebox(0,0)[lb]{Reduced Distance}}
\put(9,8.3){\vector(-2,-1){4}}\put(3.6,0){
\put(6.050000,-1){\line(0,1){1.5}}
\put(3.950000,-1){\line(0,1){1.5}}
\put(3.500000,-0.5){\line(1,0){3.000000}}
\put(5.1,-0.7){\makebox(0,0)[lt]{$\Delta^\prime$}}
\put(3.1,4){\makebox(0,0)[lb]{Low density:}}
\put(3.1,3.5){\makebox(0,0)[lb]{Euclidean Distance}}
}
\end{picture}
\end{center}
  \caption{Distribution in the query space: images in the area $\Delta$
    will all be more or less equal, as they are all potentially
    interesting. Images in the area $\Delta'$ will have their similarity
    determined as a function of their normal Euclidean distance.}
  \label{distA}
\end{figure}
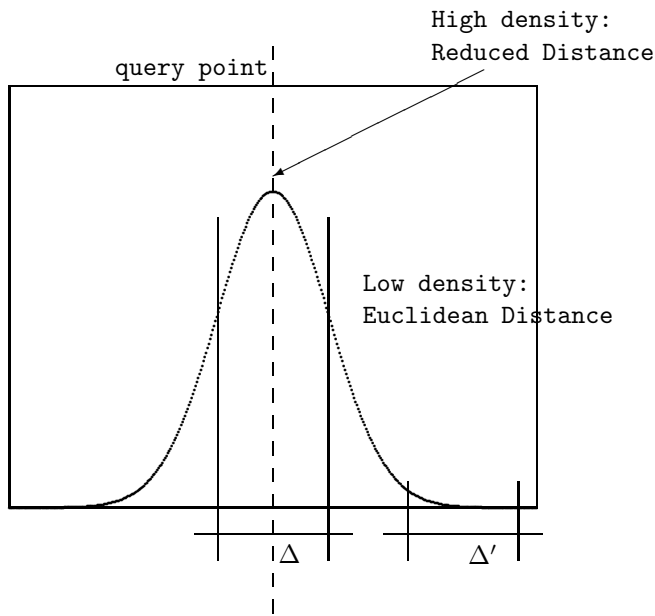
Qualitatively, the area $\Delta$ is the \dqt{interesting} area for the
user, the area where most of the relevant examples are found. Two
images placed in this area will be equally relevant, that is, the
distance between them will be small. On the other hand, two images
placed in the area $\Delta'$ will not be affected by relevance
feedback, and the distance between them will be given by the normal
Euclidean distance. Note that in this section we are assuming a
unimodal distribution; we shall consider a more general case in the
next section.

So, given the same difference in the co\"ordinates of two points, their
distance will be small in the area of high density, and will be
(approximately) Euclidean where the density is close to zero. Consider
the elementary distance element in a given position $x$ of the
axis. We can write it as
\begin{equation}
  ds^2 = g^2(x)dx^2
\end{equation}
In a uniform Euclidean space, $g(x)\equiv{1}$. In the space that we
have devised, $g(x)$ should have a behavior qualitatively similar to
that of figure~\ref{distB}.
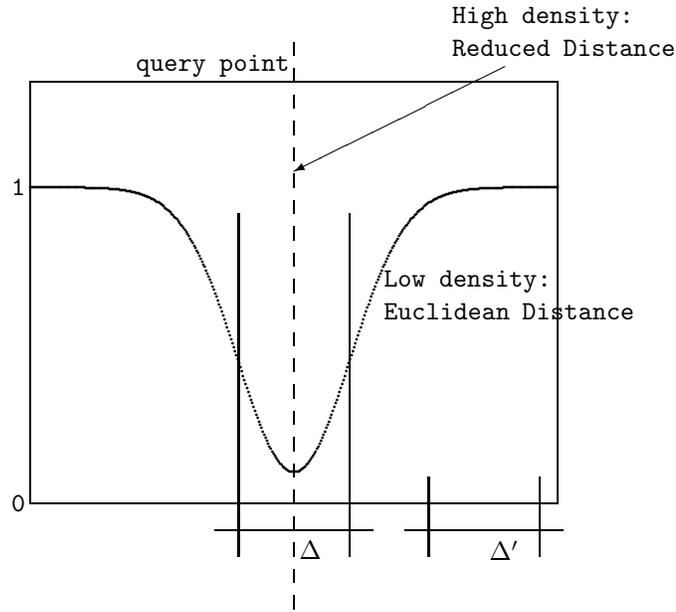
\begin{figure}
\begin{center}
\setlength{\unitlength}{1.9em}
\begin{picture}(10,11)(0,-1)
\multiput(0,0)(0,8){2}{\line(1,0){10}}
\multiput(0,0)(10,0){2}{\line(0,1){8}}
\multiput(5,-2)(0,0.5){22}{\line(0,1){0.25}}
\put(4.9,8.1){\makebox(0,0)[rb]{query point}}
\put(0.000000,5.999919){\circle*{0.000001}}
\put(0.025063,5.999910){\circle*{0.000001}}
\put(0.050125,5.999899){\circle*{0.000001}}
\put(0.075188,5.999888){\circle*{0.000001}}
\put(0.100251,5.999875){\circle*{0.000001}}
\put(0.125313,5.999860){\circle*{0.000001}}
\put(0.150376,5.999844){\circle*{0.000001}}
\put(0.175439,5.999826){\circle*{0.000001}}
\put(0.200501,5.999807){\circle*{0.000001}}
\put(0.225564,5.999785){\circle*{0.000001}}
\put(0.250627,5.999761){\circle*{0.000001}}
\put(0.275689,5.999734){\circle*{0.000001}}
\put(0.300752,5.999705){\circle*{0.000001}}
\put(0.325815,5.999672){\circle*{0.000001}}
\put(0.350877,5.999637){\circle*{0.000001}}
\put(0.375940,5.999597){\circle*{0.000001}}
\put(0.401003,5.999553){\circle*{0.000001}}
\put(0.426065,5.999505){\circle*{0.000001}}
\put(0.451128,5.999453){\circle*{0.000001}}
\put(0.476190,5.999394){\circle*{0.000001}}
\put(0.501253,5.999330){\circle*{0.000001}}
\put(0.526316,5.999260){\circle*{0.000001}}
\put(0.551378,5.999183){\circle*{0.000001}}
\put(0.576441,5.999098){\circle*{0.000001}}
\put(0.601504,5.999004){\circle*{0.000001}}
\put(0.626566,5.998902){\circle*{0.000001}}
\put(0.651629,5.998790){\circle*{0.000001}}
\put(0.676692,5.998668){\circle*{0.000001}}
\put(0.701754,5.998533){\circle*{0.000001}}
\put(0.726817,5.998386){\circle*{0.000001}}
\put(0.751880,5.998226){\circle*{0.000001}}
\put(0.776942,5.998050){\circle*{0.000001}}
\put(0.802005,5.997858){\circle*{0.000001}}
\put(0.827068,5.997649){\circle*{0.000001}}
\put(0.852130,5.997421){\circle*{0.000001}}
\put(0.877193,5.997172){\circle*{0.000001}}
\put(0.902256,5.996900){\circle*{0.000001}}
\put(0.927318,5.996605){\circle*{0.000001}}
\put(0.952381,5.996284){\circle*{0.000001}}
\put(0.977444,5.995934){\circle*{0.000001}}
\put(1.002506,5.995554){\circle*{0.000001}}
\put(1.027569,5.995142){\circle*{0.000001}}
\put(1.052632,5.994693){\circle*{0.000001}}
\put(1.077694,5.994207){\circle*{0.000001}}
\put(1.102757,5.993680){\circle*{0.000001}}
\put(1.127820,5.993109){\circle*{0.000001}}
\put(1.152882,5.992490){\circle*{0.000001}}
\put(1.177945,5.991820){\circle*{0.000001}}
\put(1.203008,5.991096){\circle*{0.000001}}
\put(1.228070,5.990312){\circle*{0.000001}}
\put(1.253133,5.989466){\circle*{0.000001}}
\put(1.278195,5.988552){\circle*{0.000001}}
\put(1.303258,5.987566){\circle*{0.000001}}
\put(1.328321,5.986503){\circle*{0.000001}}
\put(1.353383,5.985356){\circle*{0.000001}}
\put(1.378446,5.984121){\circle*{0.000001}}
\put(1.403509,5.982792){\circle*{0.000001}}
\put(1.428571,5.981362){\circle*{0.000001}}
\put(1.453634,5.979824){\circle*{0.000001}}
\put(1.478697,5.978171){\circle*{0.000001}}
\put(1.503759,5.976396){\circle*{0.000001}}
\put(1.528822,5.974492){\circle*{0.000001}}
\put(1.553885,5.972448){\circle*{0.000001}}
\put(1.578947,5.970258){\circle*{0.000001}}
\put(1.604010,5.967912){\circle*{0.000001}}
\put(1.629073,5.965400){\circle*{0.000001}}
\put(1.654135,5.962712){\circle*{0.000001}}
\put(1.679198,5.959837){\circle*{0.000001}}
\put(1.704261,5.956765){\circle*{0.000001}}
\put(1.729323,5.953484){\circle*{0.000001}}
\put(1.754386,5.949983){\circle*{0.000001}}
\put(1.779449,5.946247){\circle*{0.000001}}
\put(1.804511,5.942265){\circle*{0.000001}}
\put(1.829574,5.938022){\circle*{0.000001}}
\put(1.854637,5.933505){\circle*{0.000001}}
\put(1.879699,5.928698){\circle*{0.000001}}
\put(1.904762,5.923587){\circle*{0.000001}}
\put(1.929825,5.918155){\circle*{0.000001}}
\put(1.954887,5.912385){\circle*{0.000001}}
\put(1.979950,5.906262){\circle*{0.000001}}
\put(2.005013,5.899766){\circle*{0.000001}}
\put(2.030075,5.892880){\circle*{0.000001}}
\put(2.055138,5.885584){\circle*{0.000001}}
\put(2.080201,5.877861){\circle*{0.000001}}
\put(2.105263,5.869688){\circle*{0.000001}}
\put(2.130326,5.861046){\circle*{0.000001}}
\put(2.155388,5.851914){\circle*{0.000001}}
\put(2.180451,5.842270){\circle*{0.000001}}
\put(2.205514,5.832091){\circle*{0.000001}}
\put(2.230576,5.821356){\circle*{0.000001}}
\put(2.255639,5.810040){\circle*{0.000001}}
\put(2.280702,5.798120){\circle*{0.000001}}
\put(2.305764,5.785572){\circle*{0.000001}}
\put(2.330827,5.772371){\circle*{0.000001}}
\put(2.355890,5.758492){\circle*{0.000001}}
\put(2.380952,5.743910){\circle*{0.000001}}
\put(2.406015,5.728600){\circle*{0.000001}}
\put(2.431078,5.712534){\circle*{0.000001}}
\put(2.456140,5.695687){\circle*{0.000001}}
\put(2.481203,5.678033){\circle*{0.000001}}
\put(2.506266,5.659545){\circle*{0.000001}}
\put(2.531328,5.640196){\circle*{0.000001}}
\put(2.556391,5.619960){\circle*{0.000001}}
\put(2.581454,5.598810){\circle*{0.000001}}
\put(2.606516,5.576719){\circle*{0.000001}}
\put(2.631579,5.553661){\circle*{0.000001}}
\put(2.656642,5.529609){\circle*{0.000001}}
\put(2.681704,5.504538){\circle*{0.000001}}
\put(2.706767,5.478423){\circle*{0.000001}}
\put(2.731830,5.451237){\circle*{0.000001}}
\put(2.756892,5.422957){\circle*{0.000001}}
\put(2.781955,5.393557){\circle*{0.000001}}
\put(2.807018,5.363016){\circle*{0.000001}}
\put(2.832080,5.331310){\circle*{0.000001}}
\put(2.857143,5.298418){\circle*{0.000001}}
\put(2.882206,5.264319){\circle*{0.000001}}
\put(2.907268,5.228993){\circle*{0.000001}}
\put(2.932331,5.192421){\circle*{0.000001}}
\put(2.957393,5.154587){\circle*{0.000001}}
\put(2.982456,5.115475){\circle*{0.000001}}
\put(3.007519,5.075070){\circle*{0.000001}}
\put(3.032581,5.033359){\circle*{0.000001}}
\put(3.057644,4.990331){\circle*{0.000001}}
\put(3.082707,4.945976){\circle*{0.000001}}
\put(3.107769,4.900287){\circle*{0.000001}}
\put(3.132832,4.853258){\circle*{0.000001}}
\put(3.157895,4.804885){\circle*{0.000001}}
\put(3.182957,4.755167){\circle*{0.000001}}
\put(3.208020,4.704104){\circle*{0.000001}}
\put(3.233083,4.651700){\circle*{0.000001}}
\put(3.258145,4.597959){\circle*{0.000001}}
\put(3.283208,4.542891){\circle*{0.000001}}
\put(3.308271,4.486504){\circle*{0.000001}}
\put(3.333333,4.428814){\circle*{0.000001}}
\put(3.358396,4.369834){\circle*{0.000001}}
\put(3.383459,4.309585){\circle*{0.000001}}
\put(3.408521,4.248087){\circle*{0.000001}}
\put(3.433584,4.185366){\circle*{0.000001}}
\put(3.458647,4.121448){\circle*{0.000001}}
\put(3.483709,4.056365){\circle*{0.000001}}
\put(3.508772,3.990149){\circle*{0.000001}}
\put(3.533835,3.922837){\circle*{0.000001}}
\put(3.558897,3.854469){\circle*{0.000001}}
\put(3.583960,3.785088){\circle*{0.000001}}
\put(3.609023,3.714740){\circle*{0.000001}}
\put(3.634085,3.643474){\circle*{0.000001}}
\put(3.659148,3.571341){\circle*{0.000001}}
\put(3.684211,3.498398){\circle*{0.000001}}
\put(3.709273,3.424702){\circle*{0.000001}}
\put(3.734336,3.350315){\circle*{0.000001}}
\put(3.759398,3.275301){\circle*{0.000001}}
\put(3.784461,3.199727){\circle*{0.000001}}
\put(3.809524,3.123664){\circle*{0.000001}}
\put(3.834586,3.047183){\circle*{0.000001}}
\put(3.859649,2.970361){\circle*{0.000001}}
\put(3.884712,2.893276){\circle*{0.000001}}
\put(3.909774,2.816007){\circle*{0.000001}}
\put(3.934837,2.738639){\circle*{0.000001}}
\put(3.959900,2.661254){\circle*{0.000001}}
\put(3.984962,2.583942){\circle*{0.000001}}
\put(4.010025,2.506790){\circle*{0.000001}}
\put(4.035088,2.429890){\circle*{0.000001}}
\put(4.060150,2.353333){\circle*{0.000001}}
\put(4.085213,2.277214){\circle*{0.000001}}
\put(4.110276,2.201628){\circle*{0.000001}}
\put(4.135338,2.126670){\circle*{0.000001}}
\put(4.160401,2.052438){\circle*{0.000001}}
\put(4.185464,1.979028){\circle*{0.000001}}
\put(4.210526,1.906540){\circle*{0.000001}}
\put(4.235589,1.835071){\circle*{0.000001}}
\put(4.260652,1.764719){\circle*{0.000001}}
\put(4.285714,1.695584){\circle*{0.000001}}
\put(4.310777,1.627761){\circle*{0.000001}}
\put(4.335840,1.561349){\circle*{0.000001}}
\put(4.360902,1.496443){\circle*{0.000001}}
\put(4.385965,1.433139){\circle*{0.000001}}
\put(4.411028,1.371530){\circle*{0.000001}}
\put(4.436090,1.311709){\circle*{0.000001}}
\put(4.461153,1.253765){\circle*{0.000001}}
\put(4.486216,1.197786){\circle*{0.000001}}
\put(4.511278,1.143860){\circle*{0.000001}}
\put(4.536341,1.092069){\circle*{0.000001}}
\put(4.561404,1.042495){\circle*{0.000001}}
\put(4.586466,0.995215){\circle*{0.000001}}
\put(4.611529,0.950305){\circle*{0.000001}}
\put(4.636591,0.907835){\circle*{0.000001}}
\put(4.661654,0.867875){\circle*{0.000001}}
\put(4.686717,0.830488){\circle*{0.000001}}
\put(4.711779,0.795735){\circle*{0.000001}}
\put(4.736842,0.763673){\circle*{0.000001}}
\put(4.761905,0.734355){\circle*{0.000001}}
\put(4.786967,0.707828){\circle*{0.000001}}
\put(4.812030,0.684136){\circle*{0.000001}}
\put(4.837093,0.663319){\circle*{0.000001}}
\put(4.862155,0.645411){\circle*{0.000001}}
\put(4.887218,0.630441){\circle*{0.000001}}
\put(4.912281,0.618436){\circle*{0.000001}}
\put(4.937343,0.609414){\circle*{0.000001}}
\put(4.962406,0.603391){\circle*{0.000001}}
\put(4.987469,0.600377){\circle*{0.000001}}
\put(5.012531,0.600377){\circle*{0.000001}}
\put(5.037594,0.603391){\circle*{0.000001}}
\put(5.062657,0.609414){\circle*{0.000001}}
\put(5.087719,0.618436){\circle*{0.000001}}
\put(5.112782,0.630441){\circle*{0.000001}}
\put(5.137845,0.645411){\circle*{0.000001}}
\put(5.162907,0.663319){\circle*{0.000001}}
\put(5.187970,0.684136){\circle*{0.000001}}
\put(5.213033,0.707828){\circle*{0.000001}}
\put(5.238095,0.734355){\circle*{0.000001}}
\put(5.263158,0.763673){\circle*{0.000001}}
\put(5.288221,0.795735){\circle*{0.000001}}
\put(5.313283,0.830488){\circle*{0.000001}}
\put(5.338346,0.867875){\circle*{0.000001}}
\put(5.363409,0.907835){\circle*{0.000001}}
\put(5.388471,0.950305){\circle*{0.000001}}
\put(5.413534,0.995215){\circle*{0.000001}}
\put(5.438596,1.042495){\circle*{0.000001}}
\put(5.463659,1.092069){\circle*{0.000001}}
\put(5.488722,1.143860){\circle*{0.000001}}
\put(5.513784,1.197786){\circle*{0.000001}}
\put(5.538847,1.253765){\circle*{0.000001}}
\put(5.563910,1.311709){\circle*{0.000001}}
\put(5.588972,1.371530){\circle*{0.000001}}
\put(5.614035,1.433139){\circle*{0.000001}}
\put(5.639098,1.496443){\circle*{0.000001}}
\put(5.664160,1.561349){\circle*{0.000001}}
\put(5.689223,1.627761){\circle*{0.000001}}
\put(5.714286,1.695584){\circle*{0.000001}}
\put(5.739348,1.764719){\circle*{0.000001}}
\put(5.764411,1.835071){\circle*{0.000001}}
\put(5.789474,1.906540){\circle*{0.000001}}
\put(5.814536,1.979028){\circle*{0.000001}}
\put(5.839599,2.052438){\circle*{0.000001}}
\put(5.864662,2.126670){\circle*{0.000001}}
\put(5.889724,2.201628){\circle*{0.000001}}
\put(5.914787,2.277214){\circle*{0.000001}}
\put(5.939850,2.353333){\circle*{0.000001}}
\put(5.964912,2.429890){\circle*{0.000001}}
\put(5.989975,2.506790){\circle*{0.000001}}
\put(6.015038,2.583942){\circle*{0.000001}}
\put(6.040100,2.661254){\circle*{0.000001}}
\put(6.065163,2.738639){\circle*{0.000001}}
\put(6.090226,2.816007){\circle*{0.000001}}
\put(6.115288,2.893276){\circle*{0.000001}}
\put(6.140351,2.970361){\circle*{0.000001}}
\put(6.165414,3.047183){\circle*{0.000001}}
\put(6.190476,3.123664){\circle*{0.000001}}
\put(6.215539,3.199727){\circle*{0.000001}}
\put(6.240602,3.275301){\circle*{0.000001}}
\put(6.265664,3.350315){\circle*{0.000001}}
\put(6.290727,3.424702){\circle*{0.000001}}
\put(6.315789,3.498398){\circle*{0.000001}}
\put(6.340852,3.571341){\circle*{0.000001}}
\put(6.365915,3.643474){\circle*{0.000001}}
\put(6.390977,3.714740){\circle*{0.000001}}
\put(6.416040,3.785088){\circle*{0.000001}}
\put(6.441103,3.854469){\circle*{0.000001}}
\put(6.466165,3.922837){\circle*{0.000001}}
\put(6.491228,3.990149){\circle*{0.000001}}
\put(6.516291,4.056365){\circle*{0.000001}}
\put(6.541353,4.121448){\circle*{0.000001}}
\put(6.566416,4.185366){\circle*{0.000001}}
\put(6.591479,4.248087){\circle*{0.000001}}
\put(6.616541,4.309585){\circle*{0.000001}}
\put(6.641604,4.369834){\circle*{0.000001}}
\put(6.666667,4.428814){\circle*{0.000001}}
\put(6.691729,4.486504){\circle*{0.000001}}
\put(6.716792,4.542891){\circle*{0.000001}}
\put(6.741855,4.597959){\circle*{0.000001}}
\put(6.766917,4.651700){\circle*{0.000001}}
\put(6.791980,4.704104){\circle*{0.000001}}
\put(6.817043,4.755167){\circle*{0.000001}}
\put(6.842105,4.804885){\circle*{0.000001}}
\put(6.867168,4.853258){\circle*{0.000001}}
\put(6.892231,4.900287){\circle*{0.000001}}
\put(6.917293,4.945976){\circle*{0.000001}}
\put(6.942356,4.990331){\circle*{0.000001}}
\put(6.967419,5.033359){\circle*{0.000001}}
\put(6.992481,5.075070){\circle*{0.000001}}
\put(7.017544,5.115475){\circle*{0.000001}}
\put(7.042607,5.154587){\circle*{0.000001}}
\put(7.067669,5.192421){\circle*{0.000001}}
\put(7.092732,5.228993){\circle*{0.000001}}
\put(7.117794,5.264319){\circle*{0.000001}}
\put(7.142857,5.298418){\circle*{0.000001}}
\put(7.167920,5.331310){\circle*{0.000001}}
\put(7.192982,5.363016){\circle*{0.000001}}
\put(7.218045,5.393557){\circle*{0.000001}}
\put(7.243108,5.422957){\circle*{0.000001}}
\put(7.268170,5.451237){\circle*{0.000001}}
\put(7.293233,5.478423){\circle*{0.000001}}
\put(7.318296,5.504538){\circle*{0.000001}}
\put(7.343358,5.529609){\circle*{0.000001}}
\put(7.368421,5.553661){\circle*{0.000001}}
\put(7.393484,5.576719){\circle*{0.000001}}
\put(7.418546,5.598810){\circle*{0.000001}}
\put(7.443609,5.619960){\circle*{0.000001}}
\put(7.468672,5.640196){\circle*{0.000001}}
\put(7.493734,5.659545){\circle*{0.000001}}
\put(7.518797,5.678033){\circle*{0.000001}}
\put(7.543860,5.695687){\circle*{0.000001}}
\put(7.568922,5.712534){\circle*{0.000001}}
\put(7.593985,5.728600){\circle*{0.000001}}
\put(7.619048,5.743910){\circle*{0.000001}}
\put(7.644110,5.758492){\circle*{0.000001}}
\put(7.669173,5.772371){\circle*{0.000001}}
\put(7.694236,5.785572){\circle*{0.000001}}
\put(7.719298,5.798120){\circle*{0.000001}}
\put(7.744361,5.810040){\circle*{0.000001}}
\put(7.769424,5.821356){\circle*{0.000001}}
\put(7.794486,5.832091){\circle*{0.000001}}
\put(7.819549,5.842270){\circle*{0.000001}}
\put(7.844612,5.851914){\circle*{0.000001}}
\put(7.869674,5.861046){\circle*{0.000001}}
\put(7.894737,5.869688){\circle*{0.000001}}
\put(7.919799,5.877861){\circle*{0.000001}}
\put(7.944862,5.885584){\circle*{0.000001}}
\put(7.969925,5.892880){\circle*{0.000001}}
\put(7.994987,5.899766){\circle*{0.000001}}
\put(8.020050,5.906262){\circle*{0.000001}}
\put(8.045113,5.912385){\circle*{0.000001}}
\put(8.070175,5.918155){\circle*{0.000001}}
\put(8.095238,5.923587){\circle*{0.000001}}
\put(8.120301,5.928698){\circle*{0.000001}}
\put(8.145363,5.933505){\circle*{0.000001}}
\put(8.170426,5.938022){\circle*{0.000001}}
\put(8.195489,5.942265){\circle*{0.000001}}
\put(8.220551,5.946247){\circle*{0.000001}}
\put(8.245614,5.949983){\circle*{0.000001}}
\put(8.270677,5.953484){\circle*{0.000001}}
\put(8.295739,5.956765){\circle*{0.000001}}
\put(8.320802,5.959837){\circle*{0.000001}}
\put(8.345865,5.962712){\circle*{0.000001}}
\put(8.370927,5.965400){\circle*{0.000001}}
\put(8.395990,5.967912){\circle*{0.000001}}
\put(8.421053,5.970258){\circle*{0.000001}}
\put(8.446115,5.972448){\circle*{0.000001}}
\put(8.471178,5.974492){\circle*{0.000001}}
\put(8.496241,5.976396){\circle*{0.000001}}
\put(8.521303,5.978171){\circle*{0.000001}}
\put(8.546366,5.979824){\circle*{0.000001}}
\put(8.571429,5.981362){\circle*{0.000001}}
\put(8.596491,5.982792){\circle*{0.000001}}
\put(8.621554,5.984121){\circle*{0.000001}}
\put(8.646617,5.985356){\circle*{0.000001}}
\put(8.671679,5.986503){\circle*{0.000001}}
\put(8.696742,5.987566){\circle*{0.000001}}
\put(8.721805,5.988552){\circle*{0.000001}}
\put(8.746867,5.989466){\circle*{0.000001}}
\put(8.771930,5.990312){\circle*{0.000001}}
\put(8.796992,5.991096){\circle*{0.000001}}
\put(8.822055,5.991820){\circle*{0.000001}}
\put(8.847118,5.992490){\circle*{0.000001}}
\put(8.872180,5.993109){\circle*{0.000001}}
\put(8.897243,5.993680){\circle*{0.000001}}
\put(8.922306,5.994207){\circle*{0.000001}}
\put(8.947368,5.994693){\circle*{0.000001}}
\put(8.972431,5.995142){\circle*{0.000001}}
\put(8.997494,5.995554){\circle*{0.000001}}
\put(9.022556,5.995934){\circle*{0.000001}}
\put(9.047619,5.996284){\circle*{0.000001}}
\put(9.072682,5.996605){\circle*{0.000001}}
\put(9.097744,5.996900){\circle*{0.000001}}
\put(9.122807,5.997172){\circle*{0.000001}}
\put(9.147870,5.997421){\circle*{0.000001}}
\put(9.172932,5.997649){\circle*{0.000001}}
\put(9.197995,5.997858){\circle*{0.000001}}
\put(9.223058,5.998050){\circle*{0.000001}}
\put(9.248120,5.998226){\circle*{0.000001}}
\put(9.273183,5.998386){\circle*{0.000001}}
\put(9.298246,5.998533){\circle*{0.000001}}
\put(9.323308,5.998668){\circle*{0.000001}}
\put(9.348371,5.998790){\circle*{0.000001}}
\put(9.373434,5.998902){\circle*{0.000001}}
\put(9.398496,5.999004){\circle*{0.000001}}
\put(9.423559,5.999098){\circle*{0.000001}}
\put(9.448622,5.999183){\circle*{0.000001}}
\put(9.473684,5.999260){\circle*{0.000001}}
\put(9.498747,5.999330){\circle*{0.000001}}
\put(9.523810,5.999394){\circle*{0.000001}}
\put(9.548872,5.999453){\circle*{0.000001}}
\put(9.573935,5.999505){\circle*{0.000001}}
\put(9.598997,5.999553){\circle*{0.000001}}
\put(9.624060,5.999597){\circle*{0.000001}}
\put(9.649123,5.999637){\circle*{0.000001}}
\put(9.674185,5.999672){\circle*{0.000001}}
\put(9.699248,5.999705){\circle*{0.000001}}
\put(9.724311,5.999734){\circle*{0.000001}}
\put(9.749373,5.999761){\circle*{0.000001}}
\put(9.774436,5.999785){\circle*{0.000001}}
\put(9.799499,5.999807){\circle*{0.000001}}
\put(9.824561,5.999826){\circle*{0.000001}}
\put(9.849624,5.999844){\circle*{0.000001}}
\put(9.874687,5.999860){\circle*{0.000001}}
\put(9.899749,5.999875){\circle*{0.000001}}
\put(9.924812,5.999888){\circle*{0.000001}}
\put(9.949875,5.999899){\circle*{0.000001}}
\put(9.974937,5.999910){\circle*{0.000001}}
\put(10.000000,5.999919){\circle*{0.000001}}
\put(6.050000,-1){\line(0,1){6.5}}
\put(3.950000,-1){\line(0,1){6.5}}
\put(3.500000,-0.5){\line(1,0){3.000000}}
\put(5.1,-0.7){\makebox(0,0)[lt]{$\Delta$}}
\put(8,9){\makebox(0,0)[lb]{High density:}}
\put(8,8.5){\makebox(0,0)[lb]{Reduced Distance}}
\put(9,8.3){\vector(-2,-1){4}}\put(3.6,0){
\put(6.050000,-1){\line(0,1){1.5}}
\put(3.950000,-1){\line(0,1){1.5}}
\put(3.500000,-0.5){\line(1,0){3.000000}}
\put(5.1,-0.7){\makebox(0,0)[lt]{$\Delta^\prime$}}
\put(3.1,4){\makebox(0,0)[lb]{Low density:}}
\put(3.1,3.5){\makebox(0,0)[lb]{Euclidean Distance}}
}
\put(-0.1,0){\makebox(0,0)[r]{0}}
\put(-0.1,6.000000){\makebox(0,0)[r]{1}}
\end{picture}
\end{center}
  \caption{Qualitative behavior of the function $g(x)$ that determines
    the local distance element as $ds^2=g^2(x)dx^2$.}
  \label{distB}
\end{figure}

Let us now apply these considerations to our relevance feedback
problem. We have obtained, from the user, a set of $\N$ positive
examples, each one being a vector in $\Ft_\ai\equiv{\mathbb{R}}^\W$:
\begin{equation}
  \w_\ai = [\w_{\ai,1},\ldots,\w_{\ai,\W}]'
\end{equation}
We arrange them into a matrix:
\begin{equation}
  \mathbf{T} = \Bigl[\,\w_1\,|\,\w_2\,|\,\cdots\,|\,\w_\N\,\Bigr] \in {\mathbb{R}}^{\W\times\N}
\end{equation}
This matrix is a sample from our unknown probability distribution. If
we assume that we are in the transformed feature space (\ref{daisy}),
we can model the unimodal distribution as a Gaussian
\begin{equation}
  G(x) = \frac{1}{2\pi\mbox{det}(\Sigma)^{1/2}} \exp\bigl( -(x-\mu)'\Sigma (x-\mu) \bigr)
\end{equation}
where $\mu$ and $\Sigma$ are the sampled average and covariance. For
the sake of simplicity, we translate the co\"ordinate system so that
$\mu=0$.  We model the space as a Riemann space in which the distance
elements at position $x$ for a displacement $dx=[dx_1,\ldots,dx_\W]'$
of the co\"ordinate is%
\footnote{In differential geometry it is customary to apply Einstein's
  summation convention: whenever an index appears twice in a monomial
  expression, once as a contravariant index (viz. as a superscript)
  and once as a covariant index (viz. as a subscript), a summation
  over that index is implied. The components of the differentials $dx$
  are contravariant, while $g$ is a doubly covariant tensor.  The
  distance element would therefore be written as
  \[
  ds^2 = g_{\ci\cj}(x) dx^\ci dx^\cj
  \]
  This convention is not common in Computer Science and, for the
  sake of clarity, we shall not follow it.}%
\begin{equation}
  ds^2 = \sum_{\ci,\cj} g_{\ci\cj}(x) dx_\ci dx_\cj
\end{equation}
Based on our qualitative considerations, we shall have
\begin{equation}
  g_{\ci\cj}(x) = 1 - \alpha\exp\bigl( \frac{-x_\ci x_\cj}{\sigma_{\ci\cj}} \bigr)
\end{equation}
with $0\le\alpha\le{1}$. The factor $\alpha$ is necessary in order to
avoid that the Riemann tensor become degenerate in 0, and its
necessity will be apparent in the following.

Working in a space with this Riemann tensor is a very complex problem,
but it can be simplified if, before we define the tensor $g$, we
decouple the directions making them (approximately) independent. We
apply singular value decomposition to write $\mathbf{T}$ as
\begin{equation}
  \mathbf{T} = \mathbf{U \Sigma V}'
\end{equation}
then, if we represent the images in the rotated co\"ordinate system $Y$,
where, for image $\ai$, 
\begin{equation}
  \label{zcoord}
  y_\ai=\mathbf{U}' \w_{\ai}
\end{equation}
the covariance matrix is diagonal. Consequently, the Riemann tensor
will also be diagonal:
\begin{equation}
  g(y) = \mbox{diag}(g_1(y_1),\ldots,g_\W(y_\W))
\end{equation}
with
\begin{equation}
  g_\cj(y_\cj) = 1 - \alpha\,{\exp\left[-\left(\frac{y_{\cj}}{\sigma_{\cj}}\right)^2\right]}
\end{equation}
The distance between two points in a Riemann space is given by the
length of the \emph{geodesic} that joins them, the geodesic being a
curve of minimal length between two points (in an Euclidean space
geodesics are straight lines, on a sphere they are maximal circles,
and so on). Let $\gamma(t)$ a geodesic curve in the query space
parameterized by $t\in{\mathbb{R}}$. Then, its co\"ordinate expressions
$[\gamma_1(t),\ldots,\gamma_\W(t)]$ satisfy
\begin{equation}
  \ddot{\gamma}_\ck + \sum_{\ci\cj} \Gamma_{\ci\cj}^\ck \dot{\gamma}_\ci\dot{\gamma}_\cj = 0
\end{equation}
(as customary, the dot indicates a derivative),
where $\Gamma_{\ci\cj}^\ck$ are the \emph{Christoffel symbols}
\begin{equation}
  \Gamma_{\ci\cj}^\ck = \frac{1}{2}
  \sum_{\cl} g^{\ck\cl} \left( \frac{\partial g_{\ci\cl}}{\partial z_\cj} 
                         + \frac{\partial g_{\cj\cl}}{\partial z_\ci}
                         - \frac{\partial g_{\ci\cj}}{\partial z_\cl}
                     \right)
\end{equation}
and $g^{\ci\cj}$ are the components of the inverse of $g_{\ci\cj}$.
In our case, the only non-zero symbols are
\begin{equation}
  \Gamma_{\ck\ck}^\ck = \frac{1}{2} \bigl( g_\ck \bigr)^{-1} \frac{\partial g_\ck}{\partial z_\ck} = 
  \frac{\alpha\zexp{\ck}}{1-\alpha\zexp{\ck}} \frac{\gamma_\ck}{\sigma_\ck}
\end{equation}
The geodesic is therefore the solution of
\begin{equation}
  \ddot{\gamma_\ck} + \frac{\alpha\zexp{\ck}}{1-\alpha\zexp{\ck}} \frac{\gamma_\ck}{\sigma_\ck} (\dot{\gamma}_\ck)^2 = 0
\end{equation}
Define the auxiliary variables $\w_\ck(\gamma)=\dot{\gamma}_\ck$. Then
\begin{equation}
  \ddot{\gamma}_\ck = \frac{d\dot{\gamma}_\ck}{dt} = \frac{d\w_\ck}{d\gamma_\ck} \frac{d\gamma_\ck}{dt} = \w_\ck \frac{d\w_\ck}{d\gamma_\ck}
\end{equation}
With this change of variable we have
\begin{equation}
 \w_\ck \frac{d\w_\ck}{d\gamma_\ck} = - \frac{\alpha\zexp{\ck}}{1-\alpha\zexp{\ck}} 
 \frac{\gamma_\ck}{\sigma_\ck} (\w_\ck)^2
\end{equation}
or
\begin{equation}
  \frac{d\w_\ck}{\w_\ck} = - \frac{\alpha\zexp{\ck}}{1-\alpha\zexp{\ck}} 
 \frac{\gamma_\ck}{\sigma_\ck} d\gamma_\ck
\end{equation}
Defining $\beta_\ck=\left(\gamma_\ck/\sigma_\ck\right)^2$ we have
\begin{equation}
  \frac{d\w_\ck}{\w_\ck} = - \frac{1}{2} \frac{\alpha\exp(-\beta_\ck)}{1-\alpha\exp(-\beta_\ck)} d\beta_\ck
\end{equation}
and defining $\theta_\ck=\alpha\exp(-\beta_\ck)$ we obtain
\begin{equation}
  \frac{d\w_\ck}{\w_\ck} = \frac{1}{2} \frac{d \theta_\ck}{1-\theta_\ck}
\end{equation}
Integration yields
\begin{equation}
  \log \w_\ck = - \frac{1}{2} \log (1-\theta_\ck) + C_\ck
\end{equation}
where $C_\ck$ is a constant, that is
\begin{equation}
  \w_\ck = C_\ck(1-\theta_\ck)
\end{equation}
Rolling back the variable changes, we have
\begin{equation}
  \frac{d\gamma_\ck}{dt} = C_\ck\left[ 1 - \alpha\zexp{\ck} \right]^{-\frac{1}{2}}
\end{equation}
The constants $C_\ck$ determine the direction of the geodesic. Let
$[\tau_1,\ldots,\tau_\W]$ be the tangent vector that we want for the
geodesic in 0, then
\begin{equation}
  \left.\dot{\gamma}(0) = \frac{d\gamma_\ck}{dt}\right|_0 = \tau_\ck = C_\ck(1-\alpha)^{-\frac{1}{2}}
\end{equation}
Note that if $\alpha\rightarrow{1}$ the geodesic degenerates, as
$\dot{\gamma}(0)\rightarrow\infty$ (this is the reason why we
introduced the constant $\alpha$). If $0<\alpha<1$ we can choose
\begin{equation}
  C_\ck=\tau_\ck(1-\alpha)^{\frac{1}{2}}
\end{equation}
This leads to
\begin{equation}
  \frac{1}{\tau_\ck}
  \left[ \frac{1-\alpha\zexp{\ck}}{1-\alpha}\right]^{\frac{1}{2}} d \gamma_\ck = dt
\end{equation}
that is
\begin{equation}
  t = \frac{1}{\tau_\ck\sqrt{1-\alpha}} \int_0^{\gamma_\ck} 
   \left[1-\alpha{\exp\left[-\left(\frac{v}{\sigma_\ck}\right)^2\right]}\right]^{\frac{1}{2}} dv
\end{equation}
Defining the function
\begin{equation}
  \label{numeq}
  \Xi(x) = \int_0^x \bigl(1-\alpha\exp(-v^2)\bigr)^{\frac{1}{2}} dv
\end{equation}
we have
\begin{equation}
  \label{tqtq}
  t = \frac{\sigma_\ck}{\tau_\ck \sqrt{1-\alpha}} \Xi\Bigl(\frac{\gamma_\ck}{\sigma_\ck}\Bigl)
\end{equation}
These equations define (implicitly) the geodesic
\begin{equation}
  \gamma(t) = [\gamma_1(t),\ldots,\gamma_\W(t)]'
\end{equation}
The geodesics are curves of constant velocity and, in this case, we
have
\begin{equation}
  \dot{\gamma}(t) = [\tau_1,\ldots,\tau_\W]'
\end{equation}
Let $\beta(t)$ be any curve such that $\beta(t_0)=y_0$ and
$\beta(t_1)=y_1$. The length of the segment $y_0\!-\!y_1$ of the curve
$\beta$ is
\begin{equation}
  L(\beta) = \int_{t_0}^{t_1} |\dot{\beta}(t)| dt
\end{equation}
Given an image in $y$-co\"ordinates
(\ref{zcoord})---$y=[y_1,\ldots,y_\W]'$---its distance from the origin
(remember that in the query space the query is always placed at the
origin) is the length of a segment of geodesic that joins the origin
with the point $y$. All geodesics of the form (\ref{tqtq}) go through
the origin, so we only have to find one that, for a given $t_y$, has
$\gamma(t_y)=y$. We can take, without loss of generality, $t_y=1$: any
geodesics through $y$ can be re-parameterized so that
$\gamma(1)=y$. That is, we must have
\begin{equation}
  1 = \frac{\sigma_\ck}{\tau_\ck \sqrt{1-\alpha}} \Xi\Bigl(\frac{y_\ck}{\sigma_\ck}\Bigl)
\end{equation}
which entails an initial velocity vector
\begin{equation}
  \tau_k = \frac{\sigma_\ck}{\sqrt{1-\alpha}} \Xi\Bigl(\frac{y_\ck}{\sigma_\ck}\Bigl)
\end{equation}
Since the geodesics are of constant speed, and due to the
parameterization that we have chosen, we have
\begin{equation}
  D(z,0) = L(\gamma) = \int_0^1 |\dot{\gamma}(t)| dt = |\dot{\gamma}(0)|
\end{equation}
where
\begin{equation}
  \dot{\gamma}(0) = \bigl[\left.\frac{d\gamma_1}{dt}\right|_0,\ldots,\left.\frac{d\gamma_\W}{dt}\right|_0\bigr]' =
  [\tau_1,\ldots,\tau_\W]'
\end{equation}
therefore
\begin{equation}
  \label{walla}
  D(y,0) = \left[ \sum_\ck \left(\frac{\sigma_\ck}{\sqrt{1-\alpha}} \Xi\Bigl(\frac{y_\ck}{\sigma_\ck}\Bigl)\right)^2
    \right]^{\frac{1}{2}}
\end{equation}
This is the distance function that we shall use to re-score the data
base in response to the feedback of the user.

The computation of the function $\Xi$ entails the calculation of the
integral in (\ref{numeq}) which, for want of a closed form solution,
must be integrated numerically. Fortunately, the integrand is
well-behaved, and the integral can be approximated with a linear
interpolation on a non-uniform grid.

\section{Relevance feedback with latent variables}
In Information Retrieval one common and useful way to model sets of
documents is through the use of \emph{latent variables},
probabilistically related to the observed data, resulting in a method
known as \emph{Probabilistic Latent Semantic Analysis} \cite{hofmann:01}. This
method builds a semantic, low-dimensional representation of the data
based on a collection of binary stochastic variables that are assumed
to model the different \emph{aspects} or \emph{topics}
\cite{blei:12,zhai:09} of the data in which one might be interested. It
should be noted that these topics are assigned no a priori linguistic
characterization: they are simply binary variables whose significance
is statistical, deriving from the analysis of the data. It has
nevertheless been observed that they often do correlate with
linguistic concepts in the data.

In information retrieval, we have a collection of documents
$\Db=\{d_1,\ldots,d_\db\}$ and a collection of words
$\Wd=\{w_1,\ldots,w_\W\}$. The observation $X$ is a set of pairs,
$X=\{(d_\ai,w_\ci)\}$, where $d_\ai$ is a document and $w_\ci$ a word
that appears in it. From the observations we can estimate
$P(w_\ai,w_\ci)$, that is, the probability that document $d_\ai$ and
word $w_\ci$ be randomly selected from the corpus.

The model associates an unobserved variable
$z_\bi\in\Zt=\{z_1,\ldots,z_\K\}$ to each observation $(d_\ai,w_\ci)$.
The unobserved variables $z_\bi$ are assumed to represent topics
present in the collection of documents. Let $P(d)$ be the probability
that a document $d$ be selected, $P(z|d)$ the probability that
variable $z$ be active for $d$ (viz., the probability that document
$d$ be about topic $z$), and $P(w|z)$ the class-conditioned
probability of a word $w$ given $z$ (viz., the probability that the
topic $z$ produce word $w$). Using these probabilities, we define a
generative model for the pair $(d,w)$ as follows (see also
Figure~\ref{modus}):

\begin{description}
\item[i)] select a document $d$ with probability $P(d)$;
\item[ii)] pick a latent variable $z$ with probability $P(z|d)$;
\item[iii)] generate a word $w$ with probability $P(w|z)$.
\end{description}

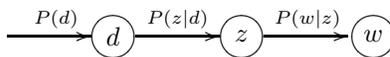
\begin{figure}[htbp]
  \begin{center}
    $\displaystyle
    \xymatrix@C=3em{
      {} \ar[r]^{P(d)} & *++[o][F]{d} \ar[r]^{P(z|d)} & *++[o][F]{z} \ar[r]^{P(w|z)} & *++[o][F]{w}
    }
    $
  \end{center}
  \caption{The generative model for pairs $(d,w)$: a document
    $d\in\Db$ is selected with probability $P(d)$; from this document,
    a topic $z\in\Zt$ is chosen with probability $P(z|d)$ and, given
    this topic, a word $w\in\Wd$ is generated with probability
    $P(w|z)$. Note that in this model there is no direct dependence
    between $w$ and $d$.}
  \label{modus}
\end{figure}

The model can be used to predict the observation probabilities of a
pair $(d,w)$:
\begin{equation}
  P(d,w) = P(w|d)P(d) = \sum_{\bi=1}^{\K} P(w|z_\bi)P(z_\bi|d)P(d)
\end{equation}
This model is asymmetric in $d$ and $w$, and undesirable
characteristic. One can use Bayes's theorem,
\begin{equation}
  P(z|d)P(d) = P(d|z)P(z)
\end{equation}
to write
\begin{equation}
  P(d,w) = \sum_{\bi=1}^{\K} P(w|z_\bi)P(d|z_\bi)P(z_\bi)
\end{equation}
A model that is symmetric in $d$ and $w$ and whose interpretation is
(Figure~\ref{rebus}):

\begin{description}
\item[i)] select a topic $z$ with probability $P(z)$;
\item[ii)] generate a document $d$ containing that topic with probability $P(d|z)$
\item[iii)] generate a word $w$ associated to the topic with probability $P(w|z)$.
\end{description}

\begin{figure}[htbp]
  \begin{center}
    $\displaystyle
    \xymatrix@C=3em@R=2em{
      *++[o][F]{d} & *++[o][F]{z} \ar[l]_{P(z|d)} \ar[r]^{P(w|z)} & *++[o][F]{w} \\
                   & {} \ar[u]_{P(z)} 
    }
    $
  \end{center}
  \caption{The symmetric version of the model of Figure~\ref{modus}:
    here we choose a topic $z$ with probability $P(z)$ then, based on
    this, we generate a document $d$ with probability $P(d|z)$ and a
    word $w$ with probability $P(w|z)$.}
  \label{rebus}
\end{figure}
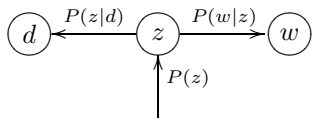

The parameters of this model are $P(z_\bi)$, $P(d_\ai|z_\bi)$, and
$P(w_\ci|z_\bi)$, where $\ai=1,\ldots,\db$, $\bi=1,\ldots,\K$, and
$\ci=1,\ldots,\W$. The probability distributions depend in general on
some parameters $\theta$, which we have to determine. We do this by
maximizing the log-probability of the observations:
\begin{equation}
  \label{kaboom}
  \begin{aligned}
    {\mathcal{L}} &= \log P(X,Z|\theta) = \log \prod_{\ai=1}^{\db}\prod_{\bi=1}^{\K} P(x_\ai,z_\bi|\theta) \\
                  &= \sum_{\ai=1}^{\db}\sum_{\bi=1}^{\K} \log P(x_\ai,z_\bi|\theta) 
  \end{aligned}
\end{equation}
If we had the complete observations (viz., if we had observed $X$ and
the latent variables $Z$ and had therefore triplets
$(d_ai,w_\ci,z_\bi)$), the maximization, especially in the case of
exponential and Gaussian distribution, would be relatively
easy. Unfortunately, we do not observe $Z$, but only pairs
$(d_\ai,w_\ci)$.  With only these data, we have to estimate, in
addition to $\theta$, the unknown parameters $P(z_\bi)$,
$P(w_\ci|z_\bi)$, and $P(d_\ai|z_\bi)$, making the problem much
harder. A common algorithm for solving this estimation problem is
Expectation Maximization (henceforth: EM). Define the function
$Q(\theta,\theta^0)$ as
\begin{equation}
  \label{kookoo}
  Q(\theta,\theta^0) = {\mathbb{E}}_{Z|Z,\theta^0}\bigl[ \log P(X,Z|\theta)\bigr] =
    \sum_{\bi=1}^\K P(z_\bi|X,\theta^0) \log P(X,Z|\theta)
\end{equation}
Then $\theta^0$ and the other unknown parameters are initialized to
suitable random values and the following iteration is applied%

\begin{description}
\item[E:] compute $P(Z|X,\theta^0)$
\item[M:] $\displaystyle \theta^0 \leftarrow \mbox{arg} \max_\theta Q(\theta,\theta^0)$
\end{description}

The first step (Expectation) uses the previous parameters to find the
values $P(Z|X,\theta^0)$ that determine the expected value necessary
to compute $Q$, the second (Maximization) uses the probabilities
computed in E to determine new parameters $\theta$ by maximizing
$Q$. It can be shown that this leads to the maximization of
$Q(\theta,\theta)$ and this, in turn, of (\ref{kaboom}); for the
details, we refer the reader to \cite{dempster:77}.

\separate

In order to apply this method to our problem, we consider again the
semantic feature space $\Ft\equiv\Ft^1\times\dots\times\Ft^\W$. We
consider the space $\Ft^\ci$ as the space that defines the $\ci$th
\emph{visual word} $\Phi^\ci$. Assume that each space $\Ft^\ci$ is endowed
with a probability density $p_\ci$. The feedback provides us with $\N$
observations, $X=\{x_1,\ldots,x_\N\}$, each of them being a tuple:
\begin{equation}
  x_\ai = [d_\ai,\w_{\ai1},\ldots,\w_{\ai\W}]
\end{equation}
with $\w_{\ai\ci}\in\Ft^\ci\equiv{\mathbb{R}}^{\M_\ci}$ being the
$\ci$th feature of the item $d_\ai\in\Db$. The value
$p_\nu(\w_{\ai\ci})$ is the probability (density) that $x_\ai$ express
the word $\Phi^\ci$. Note that word expression is continuous: each
observation contains all the words, a word in a volume $dV$ around
$\Phi^\ci$ being expressed with a probability $p_\ci(\w_{\ai\ci})dV$,
while in the discrete case the probability that any word be expressed
in a given document is either $1$ or $0$ (viz., each document contains
some of the words). The document identities, $d_\ai$, on the other
hand, are discrete (their probabilities are either $1$ or $0$), making
our model a mixed one.

We now introduce a vector of $\K$ \dqt{topical} latent random
variables, $Z=[z_1,\ldots,z_\K]$ in which one element is equal to one
and all the others are zero, that is,
\begin{equation}
  \label{boo}
  \begin{aligned}
    z_\bi &\in \{0,1\} \\
    \sum_{\bi=1}^\K z_\bi &= 1
  \end{aligned}
\end{equation}
There are $\K$ possible states for $Z$. We interpret the fact that
$z_\bi=1$ as the presence of topic $z_\bi$ in the item that we
are considering. Let $\pi_\bi=P\{z_\bi=1\}$,with $0\le\pi_\bi\le{1}$
and $\sum_\bi\pi_\bi=1$. 
Each topic is associated with a probability distribution in each word
space, $p_{\bi\ci}=p_\nu(\w|z_\ci=1)$, which gives the probability of
observing the word $\Phi^\ci$ when topic $z_\bi$ is expressed. We
assume that there distributions are Gaussians%
\footnote{Given that we have a mixed model, we have to work both with
  probability distribution (for the discrete variables) and densities
  (for the continuous). We shall use a lowercase letter, $p$, to
  indicate densities, and an uppercase, $P$, to indicate
  probabilities.}%
~with means $\mu_{\bi\ci}\in{\mathbb{R}}^{\M_\ci}$ and covariance matrices
$\Sigma_{\bi\ci}\in{\mathbb{R}}^{\M_\ci\times{M_\ci}}$
\begin{equation}
  p_\ci(\w|z_\bi=1) = {\mathcal{N}}(\w;\mu_{\bi\ci},\Sigma_{\bi\ci})
\end{equation}
Given an observation $x_\ai$, we have
\begin{equation}
  p(x_\ai|z_\bi=1) = P(d_\ai|z_\bi) \prod_{\ci=1}^\W {\mathcal{N}}(\w_{\ai\ci};\mu_{\bi\ci},\Sigma_{\bi\ci})
\end{equation}
%
%
The parameters that we have to determine for the model are:
\begin{equation}
  \label{lookiehere}
  \begin{array}{rcl}
    P(d_\ai|z_\bi) \in [0,1] & \rule{2em}{0pt} & \ai=1,\ldots,\N; \bi=1,\ldots,\K \\
    \pi_k \in [0,1] & & \bi=1,\ldots,\K \\
    \mu_{\bi\ci}\in{\mathbb{R}}^{\M_\ci} & & \bi=1,\ldots,\K; \ci=1,\ldots,\W  \\
    \Sigma_{\bi\ci}\in{\mathbb{R}}^{\M_\ci\times{M_\ci}} & & \bi=1,\ldots,\K; \ci=1,\ldots,\W  \\
  \end{array}
\end{equation}
We determine them by maximizing the log probability (\ref{kaboom}),
which we write, in this case, as
\begin{equation}
  {\mathcal{L}} = \log p(X|\pi,\mu,\Sigma) = \sum_{\ai=1}^\N \log p(x_\ai|\pi,\mu,\Sigma)
\end{equation}
with $\pi$, $\mu$, $\Sigma$ being structures that collect the
respective parameters. In the following, in order to simplify the
notation, we shall often omit the conditioning on $\pi$, $\mu$, and
$\Sigma$. We can factor $p(x)$ using $Z$ as
\begin{equation}
  p(x) = \sum_{\bi=1}^\K p(x|z_k)p(z_k) = \sum_{\bi=1}^\K \pi_\bi 
  P(d_\ai|z_\bi) \prod_{\ci=1}^\W {\mathcal{N}}(\w_{\ai\ci};\mu_{\bi\ci},\Sigma_{\bi\ci})
\end{equation}
so that 
\begin{equation}
  {\mathcal{L}} = \sum_{\ai=1}^\N \log \Bigl[
    \sum_{\bi=1}^\K \pi_\bi P(d_\ai|z_\bi) \prod_{\ci=1}^\W {\mathcal{N}}(\w_{\ai\ci};\mu_{\bi\ci},\Sigma_{\bi\ci})
    \Bigr]
\end{equation}

In order to apply EM, we need to define the function
$Q(\theta,\theta^0)$ as in (\ref{kookoo}), where
$\theta=[\pi,\mu,\Sigma]$. We begin by determining $P(Z|X)$:
\begin{equation}
  P(Z|X) = \prod_{\ai=1}^\N\prod_{\bi=1}^\K P(z_\bi|x_\ai) \sim
  \prod_{\ai=1}^\N\prod_{\bi=1}^\K P(x_\ai | z_\bi)P(z_\bi)
\end{equation}
where, in the last equation, we have equality if we add a
normalization value so that $\sum_\bi{P(z_\bi|x_\ai)}=1$. Set
$\gamma_{\ai\bi}=P(z_\bi|x_\ai)$. Then
\begin{equation}
  \gamma_{\ai\bi} \dfeq C T_{\ai\bi} = 
  C \pi_\bi P(d_\ai|z_\bi) \prod_{\ci=1}^\W {\mathcal{N}}(\w_{\ai\ci}; \mu_{\bi\ci}, \Sigma_{\bi\ci})
\end{equation}
Normalizing we have
\begin{equation}
  C = \frac{1}{\sum_{bj} T_{\ai\bj}}
\end{equation}
so that
\begin{equation}
  \gamma_{\ai\bi} = \frac{T_{\ai\bi}}{\sum_{\bj} T_{\ai\bj}}
    = \frac{\pi_\bi P(d_\ai|z_\bi) \prod_{\ci=1}^\W {\mathcal{N}}(\w_{\ai\ci}; \mu_{\bi\ci}, \Sigma_{\bi\ci})}
    {\sum_{\bj=1}^\K \pi_\bi P(d_\ai|z_\bj) \prod_{\ci=1}^\W {\mathcal{N}}(\w_{\ai\ci}; \mu_{\bj\ci}, \Sigma_{\bj\ci})}
\end{equation}
These values are computed in the E step using the parameters of the
previous iteration. For the M step, we need to compute
\begin{equation}
  \begin{aligned}
    Q &= {\mathbb{E}}_{Z|X}\Bigl[ \log P(Z|X) \bigr] = \sum_{\ai=1}^\N \sum_{\bi=1}^\K \gamma_{\ai\bi} \log p(x_\ai|z_\bi)P(z_\bi) \\
    &= \sum_{\ai=1}^\N \sum_{\bi=1}^\K \gamma_{\ai\bi} \log \left( \pi_\bi P(d_\ai|z_\bi) 
         \prod_{\ci=1}^\W {\mathcal{N}}(\w_{\ai\ci}; \mu_{\bi\ci}, \Sigma_{\bi\ci}) \right)\\
    &= \sum_{\ai=1}^\N \sum_{\bi=1}^\K \gamma_{\ai\bi} \Bigl[
           \log \pi_\bi + \log P(d_\ai|z_\bi) + \sum_{\ci=1}^\W \log {\mathcal{N}}(\w_{\ai\ci}; \mu_{\bi\ci}, \Sigma_{\bi\ci}) \Bigr]
  \end{aligned}
\end{equation}

We maximize $Q$ with respect to $\pi_k$, $P(d_\ai|z_\bi)$,
$\mu_{\bi\ci}$, $\Sigma_{\bi\ci}$, keeping $\gamma_{\ai\bi}$ fixed,
subject to the conditions
\begin{equation}
  \label{troop}
  \sum_{\bi=1}^\K \pi_k = 1
\end{equation}
and
\begin{equation}
  \label{gimp}
  \sum_{\ai=1}^\N P(d_\ai|z_\bi) = 1\ \ \ \bi=1,\ldots,\K
\end{equation}
Also, define
\begin{equation}
  N_\bi = \sum_{\ai=1}^\N \gamma_{\ai\bi}
\end{equation}
$N_k$ is the \dqt{effective} (viz., weighted by probability) number of
items that express topic $z_\bi$. We introduce the Lagrange multipliers
$\lambda$, $\zeta_\bi$ and maximize the function
\begin{equation}
  \begin{aligned}
    {\mathcal{F}} = & \sum_{\ai=1}^\N \sum_{\bi=1}^\K \gamma_{\ai\bi} \Bigl[
      \log \pi_\bi + \log P(d_\ai|z_\bi) + \sum_{\ci=1}^\W \log {\mathcal{N}}(\w_{\ai\ci}; \mu_{\bi\ci}, \Sigma_{\bi\ci}) \Bigr] \\
    & - \lambda\Bigl[ \sum_{\bi=1}^\K \pi_k - 1\Bigr] - \sum_{\bi=1}^\K \zeta_\bi \Bigl[ \sum_{\ai=1}^\N P(d_\ai|z_\bi) - 1 \Bigr]
  \end{aligned}
\end{equation}
In order to maximize ${\mathcal{F}}$, we set the derivative with
respect to the parameters to zero. For the $\pi_\bi$, we have
\begin{equation}
  \label{qonset}
  \begin{aligned}
    \frac{\partial {\mathcal{F}}}{\partial \pi_\bi} & = 
  \frac{\partial}{\partial \pi_\bi} \sum_{\ai=1}^\N \sum_{\bj=1}^\K \gamma_{\ai\bj} \log \pi_\bj -
  \lambda \frac{\partial}{\partial \pi_\bi} \sum_{\bj=1}^\K \pi_\bj \\
  & = 
  \sum_{\ai=1}^\N \gamma_{\ai\bi} \frac{1}{\pi_\bi} - \lambda \sum_{\bj=1}^\K \pi_\bj \\
  & = 
  \sum_{\ai=1}^\N \gamma_{\ai\bi} \frac{1}{\pi_\bi} - \lambda \\
  & = 0
  \end{aligned}
\end{equation}
Multiplying by $\pi_\bj$, summing over $\bj$ and using (\ref{troop}), we have
\begin{equation}
  \sum_{\bj=1}^\K \sum_{ai=1}^\N \gamma_{\ai\bj} - \lambda \sum_{\bj=1}^\K \pi_\bj = 0
\end{equation}
That is, $\lambda=\sum_{\bj}N_\bj=\N$. Applying this back to
(\ref{qonset}), we have
\begin{equation}
  \pi_\bi = \frac{N_\bi}{N}
\end{equation}

Similarly, for $P(d_\ai|z_\bi)$, we have
\begin{equation}
  \begin{aligned}
    \frac{\partial {\mathcal{F}}}{\partial P(d_\ai|z_\bi)} & = 
    \frac{\partial}{\partial P(d_\ai|z_\bi)} \sum_{\aj=1}^\N \sum_{\bj=1}^\K \gamma_{\aj\bj} \log P(d_\aj|z_\bj) -
    \frac{\partial}{\partial P(d_\ai|z_\bi)} \sum_{\bj=1}^\K \zeta_\bj \sum_{\aj=1}^\N P(d_\aj|z_\bj) \\
    &= \gamma_{\ai\bi} \frac{1}{P(d_\ai|z_\bj)} - \zeta_\bj = 0
  \end{aligned}
\end{equation}
Multiplying by $P(d_\ai|z_\bj)$, summing and applying (\ref{gimp}), we
have $\zeta_\bi=N_\bi$ and
\begin{equation}
  P(d_\ai|z_\bi) = \frac{\gamma_{\ai\bi}}{N_\bi}
\end{equation}

The optimization with respect to $\mu_{\bi\ci}$ and $\Sigma_{\bi\ci}$
is unconstrained, and no Lagrange multipliers are present. We have
\begin{equation}
  \frac{\partial {\mathcal{F}}}{\partial \mu_{\bi\ci}} = 
  - \sum_{\ai=1}^\N \underbrace{\frac{ \pi_\bi {\mathcal{N}}(\w_{\ai\ci}; \mu_{\bi\ci}, \Sigma_{\bi\ci})}
  {\sum_{\bj=1}^\K \pi_\bj {\mathcal{N}}(\w_{\ai\ci}; \mu_{\bj\ci}, \Sigma_{\bj\ci})}}_{\gamma_{\ai\bi}} 
  \Sigma_{\bi\ci} (\w_{\ai\ci} - \mu_{\bi\ci})
\end{equation}
%
%
Multiplying by $\Sigma_{\bi\ci}^{-1}$ and rearranging, we have
\begin{equation}
  \mu_{\bi\ci} = \frac{1}{N_k} \sum_{\ai=1}^\N \gamma_{\ai\bi} \w_{\ai\ci}
\end{equation}
and similarly
\begin{equation}
  \Sigma_{\bi\ci} = \frac{1}{N_k} \sum_{\ai=1}^\N \gamma_{\ai\bi} (\w_{\ai\ci}-\mu_{\bi\ci})(\w_{\ai\ci}-\mu_{\bi\ci})'
\end{equation}
The iteration of the EM algorithm is shown in Figure~\ref{EM}. The
stopping criterion is either the stabilization of ${\mathcal{L}}$ or
of the parameter values.
\begin{figure}
  \rule{3em}{0pt}\parbox{\columnwidth}{
    \cb
    E: \> $\displaystyle T_{\ai\bi} \leftarrow 
            \pi_\bi P(d_\ai|z_\bi) \prod_{\ci=1}^\W {\mathcal{N}}(\w_{\ai\ci}; \mu_{\bi\ci}, \Sigma_{\bi\ci})$ \\
       \> $\displaystyle \gamma_{\ai\bi} \leftarrow \frac{T_{\ai\bi}}{\sum_{\bj} T_{\ai\bj}}$ \\
       \> $\displaystyle N_\bi \leftarrow \sum_{\ai=1}^\N \gamma_{\ai\bi}$ \\
    M: \> $\displaystyle P(d_\ai|z_\bi) \leftarrow \frac{\gamma_{\ai\bi}}{N_{\bi}}$ \\
       \> $\displaystyle \pi_\bi \leftarrow \frac{N_\bi}{\N}$ \\
       \> $\displaystyle \mu_{\bi\ci} \leftarrow \frac{1}{N_k} \sum_{\ai=1}^\N \gamma_{\ai\bi} \w_{\ai\ci}$ \\
       \> $\displaystyle \Sigma_{\bi\ci} \leftarrow \frac{1}{N_k} \sum_{\ai=1}^\N \gamma_{\ai\bi} (\w_{\ai\ci}-\mu_{\bi\ci})(\w_{\ai\ci}-\mu_{\bi\ci})'$
    \ce
  }
  \caption{The EM algorithm applied to the mixed problem with discrete
    item probability and continuous word distributions. Each step in
    this figure represents a series of similar steps performed for all
    values of $\ai$, $\bi$, $\ci$.}
  \label{EM}
\end{figure}

In order to reduce the number of parameters, we can use the same
technique that we have used in section \ref{riemann}: we apply SVD to
transform the co\"ordinates in a rotated space in which they are
essentially uncorrelated. At the same time, we can keep only the
dimensions corresponding to the largest eigenvalues to reduce
dimensionality. This is what we called the $Y$-space in section
\ref{riemann}. In this section, to avoid complicating further the
notation, we shall still indicate the vectors in this space as
$\w_{\ai\ci}$, keeping in mind that they are uncorrelated and
therefore that
\begin{equation}
  \Sigma_{\bi\ci} = \mbox{diag}(\sigma_{\bi\ci,1}^2,\ldots,\sigma_{\bi\ci,\M_\ci}^2)
\end{equation}
In this space, the last step of the algorithm is replaced by
\begin{equation}
  \sigma_{\bi\ci,\di} \leftarrow \sum_{\ai=1}^\N \gamma_{\ai\bi} \frac{1}{N_\bi} (w_{\ai\ci,\di}-\mu_{\bi\ci,\di})^2
\end{equation}

With the application of the EM algorithm, we have obtained the topic
probabilities $\pi_\bi$ and, for each topic $z_\bi$ and word
$\Phi^\ci$, a probability distribution with parameters $\mu_{\bi\ci}$
and $\sigma^2_{\bi\ci}$. 

We now use these parameters and the results of section \ref{riemann}
to endow the feature space $\Ft=\Ft^1\times\cdots\times\Ft^\W$ with a
suitable metric: the items of the data base closest to the query
according to this metric will be those returned by the relevance
feedback algorithm.

Assume $z_\bi=1$. Then each $\Ft^\ci$ is endowed with a Gaussian
distribution. The distance between the feature vector $\w_\ci$
and the center of the distribution is given by (\ref{walla}):
\begin{equation}
  D_{\bi\ci}(\w_\ci) = \Bigl[ \sum_{\di=1}^{\M_\ci} \Bigl(
    \frac{\sigma_{\bi\ci,\di}}{\sqrt{1-\alpha}} \Xi\bigl( \frac{\w_{\ci,\di}-\mu_{\bi\ci,\di}}{\sigma_{\bi\ci,\di}} \bigr)
    \Bigr)^2 \Bigr]^{1/2}
\end{equation}
The word spaces are orthogonal, so the squares of the distances are
additive. The distance between a feature vector $\w\in\Ft$ and the
origin, given that $z_\bi=1$, is
\begin{equation}
  D_{\bi}^2(\w) = \sum_{\ci=1}^\W D_{\bi\ci}^2(\w_\ci)
\end{equation}
To determine the complete distance, we simply average over all
$z_\bi$:
\begin{equation}
  D(\w) = \sum_{\bi=1}^\K \pi_\bi D_{\bi}(\w) = \sum_{\bi=1}^\K \pi_\bi \Bigl[\sum_{\ci=1}^\W D_{\bi\ci}^2(\w_\ci)\Bigl]^{1/2}
\end{equation}

\section{Tests}
Our experimental method is based on the observation that, in order to
be useful, relevance feedback must have semantic value, that is, it
must correlate positively with the linguistic categorization that
people are likely to use when interacting with the data base. We shall
consider categories as sets of images and assume that the linguistic
categorization on which the taxonomy is based is a representative
sample of what a person interacting with a data base might
use. Consequently, we shall look for positive correlations between
the results of relevance feedback and the taxonomy of our data set.

This observation places us in the middle of the so-called
\emph{semantic gap} \cite{santini:97c}: the essential lack of
correlation between computable visual features and linguistic
categorization. We aim at checking to what extent relevance feedback
allows us to bridge this gap.

Relevance feedback is an algorithm, one of the many components of a
complete retrieval system. The performance that a user \dqt{sees}
depends on all these components, from feature extraction to indexing,
to user interface. Here we are not interested in the evaluation of a
complete system, nor are we interested in evaluating relevance
feedback \emph{in the context} of a specific system. We are looking
for an evaluation as neutral as possible of the algorithms
independently of the system in which they work. This consideration
problematizes, if not outright excludes, the recourse to user studies,
as these are always evaluation of a whole system in a specific
context.

In order to evaluate the semantic value of our methods in a
system-neutral way, we use a categorized data set and use relevance
feedback as an example-based category predictor. Consider a data base
$\Db$ of $\db$ images, and a target subset $M\subseteq{\Db}$ of images that
belong to a given linguistic category. To this data base we apply
relevance feedback by selecting $r$ images from the set $M$, that is,
by assuming that the user has selected $r$ images representative of
the desired category. We use these $r$ samples to query the data base
using the relevance feedback schemas that we are evaluating and we
collect a result list $Q$ containing $q$ elements. We determine the
number $k$ of elements of $Q$ that belong to the target set $M$ (that
is, $k=|Q\cap{M}|$): this is the number of \emph{hits} of the method
that we are evaluating for that trial. If we repeat the test for
constant value of $r$ and different sets $M$, the average value of $k$,
$\bar{k}$ is a measure of the performance of the method for those
values of $r$ and $N$. The variables and the parameters for the tests
are summarized in table~\ref{parm}
\begin{table}
  \begin{center}
    \begin{tabular}{ccc}
      & \rule{0.5em}{0pt} & \\
      \begin{tabular}{|c|p{2.3in}|}
        \multicolumn{2}{c}{Variables} \\
        \hline
        $\Db$ & the data base \\
        \hline
        $M$ & the set of images of the target category ($M\subseteq{\Db}$) \\
        \hline
        $\Rs$ & the positive samples selected for relevance feedback ($\Rs\subseteq{M}$) \\
        \hline
        $Q$ & the result list after relevance feedback \\
        \hline
      \end{tabular}
      & & 
      \begin{tabular}{|c|p{2.3in}|}
        \multicolumn{2}{c}{Paramters} \\
        \hline
        $\db$ & size of $\Db$ \\
        \hline
        $m$ & size of $M$ \\
        \hline
        $r$ & size of $\Rs$ ($r\le{m}$) \\
        \hline
        $q$ & size of $Q$ ($q\le{m}$) \\
        \hline
        $k$ & number of elements of $Q$  that belong to $M$ (number of \emph{hits}) \\
        \hline
      \end{tabular}
      \\
      (a) & & (b) 
    \end{tabular}
  \end{center}
  \caption{}
  \label{parm}
\end{table}

The tests were carried out using test partition 1 of the SUN data set,
consisting of 19,850 images, divided in 397 categories of 50 images
each \cite{xiao:10}. In each trial we use one of these categories, selected at
random, as our target set $M$. This entails that in all our tests we
have $m=50$. The result set $Q$ should be quite smaller than $M$ to
avoid running out of \dqt{good} images: this will allow us the
theoretical possibility of a perfect score, that is, of having all
results from the target category. In order to have plenty of good
images to spare, we set $q=20$%
\footnote{During some preliminary tests, we checked for stability with
  respect to the choice of $q$; the result is that, unless $q$ is very
  small or very close to $m$, the results are consistent and
  independent of its value.}%
. This leaves us with the problem of selecting $\db$, the size of the
data base. We determine $\db$ so as to be able to compare the algorithms
with the results obtained by random selection, so that we can use
random selection as a control group. We show in Appendix
\ref{appendix2} that if we have a set $\Db$ of $\db$ elements, a target
set $M$ with $m$ elements and a set $Q$ with $q$ elements chosen at
random from $D$ ($q\le{m}$), the average size of $Q\cap{M}$ is
\begin{equation}
  \label{avek}
  \bar{k} = \frac{qm}{\db}
\end{equation}
So we can choose the desired values of $\bar{k}$ and use (\ref{avek})
to derive $\db$.

We carry out tests with $\bar{k}=[0.1,0.5,1,5,10]$, which correspond
to $\db=[10,000;\,2,000;\,1,000;\,200;\,100]$. A control test will be
carried out to check out when is the number of images returned by the
various methods significantly better than the random average.

Finally, we chose the set of images in which retrieval is based: the
set $\Rs$ is drawn at random from the set $M$ of target images and its
size is chosen to cover a wide range of values: $r=[2,5,10,20,30]$.

All the tests are repeated 20 times, each time with a different target
set $M$ and feedback set $\Rs$ picked at random. The comparison with
the random selection was done determining the variance of the results
and using ANOVA ($p<0.01$). For the comparison between the methods, the data
were generated at random but for each trial all the methods were
executed with the same data. This allowed us to determine significance
using a simple signs rule \cite{crow:60}, which has the advantage of
not resting on an assumption of normality.

We represent the images using 10 out of the 12 standard features of
Table \ref{sizealot}. We do not consider the two largest features, as
their size makes them unmanageable for the Rui \& Huang
algorithm. Between eliminating two features and giving up comparison
with Rui \& Huang, we have chosen the first option. It should be noticed,
however, that even so the Rui \& Huang algorithm took a time at least an
order of magnitude larger than that of the other methods, up to ten
minutes per trial on a standard personal computer when implemented
using MATLAB%
\footnote{This long execution time is not a consequence of our
  implementation: virtually the whole time is spent in computing SVD,
  which is very efficient in MATLAB.}%
.

\def\tabsym{\setlength{\unitlength}{0.65em}%
            \begin{picture}(3,3)(0,0)%
              \put(0,3){\line(1,-1){3}}%
              \put(0.5,1){\makebox(0,0){$\bar{k}$}}%
              \put(2.5,2){\makebox(0,0){$r$}}%
            \end{picture}}

\newcommand{\te}[1]{\textbf{\small{#1}}}

\newcommand{\ten}[1]{\small{{#1}}}

\begin{figure}
  \begin{center}
    \begin{tabular}{r|cc|cc|cc|cc|cc|}
      \multicolumn{11}{l}{Rocchio} \\
      & \multicolumn{2}{|c|}{2} & \multicolumn{2}{|c|}{5} & \multicolumn{2}{|c|}{10} & \multicolumn{2}{|c|}{20} & \multicolumn{2}{|c|}{30} \\
      \cline{2-11}
      \setlength{\unitlength}{1em}
      \makebox(1.5,0)[rb]{\tabsym}
      & $\mu$ & $\sigma^2$ & $\mu$ & $\sigma^2$ & $\mu$ & $\sigma^2$ & $\mu$ & $\sigma^2$ & $\mu$ & $\sigma^2$ \\
      \hline
      \hline
      10 & \te{12.85} & 16.13 & \ten{12.05} & 13.63 & \te{13.20} & 13.43 & \te{13.05} & 11.10 & \te{13.70} & 11.80 \\
      \hline
      5  & \te{2.45} & 9.10 & \te{8.20} & 14.69 & \te{7.95} & 19.52 & \te{9.20} & 17.64 & \ten{7.45} & 20.05 \\
      \hline
      1 & \te{3,80} & 11.64 & \te{2.85} & 5.50 & \ten{1.75} & 3.67 & \te{2.85} & 4.03 & \ten{2.10} & 4.83 \\
      \hline
      0.5 & \te{2.15} & 5.61 & \ten{1.45} & 3.63 & \ten{1.5} & 3.42 & \ten{1.05} & 1.63 & \ten{1.15} & 2.34 \\
      \hline
      0.1 & \ten{0.8} & 1.64 & \ten{0.6} & 1.41 & \ten{0.35} & 0.34 & \ten{0.2} & 0.48 & \ten{0.3} & 0.64 \\
      \hline
      \hline
      \multicolumn{11}{l}{} \\
      \multicolumn{11}{l}{MARS} \\
      \hline
      10 & \ten{11.70} & 12.85 & \te{14.10} & 10.52 & \te{16.80} & 4.38 & \te{16.80} & 11.22 & \te{15.25} & 9.36 \\
      \hline
      5 & \te{7.00} & 10.11 & \te{12.85} & 15.82 & \te{14.05} & 12.47 & \te{13.30} & 12.85 & \te{13.35} & 12.98 \\
      \hline
      1 & \te{4.20} & 4.69 & \te{7.75} & 4.93 & \te{9.50} & 8.37 & \te{9.30} & 18.96 & \te{8.20} & 17.33 \\
      \hline
      0.5 & \te{2.75} & 0.83 & \te{6.60} & 1.94 & \te{7.40} & 6.78 & \te{6.10} & 18.20 & \te{4.35} & 7.08 \\
      \hline
      0.1 & \te{2.00} & 0.32 & \te{5.35} & 0.66 & \te{4.65} & 5.29 & \te{2.50} & 5.42 & \te{2.10} & 5.46 \\
      \hline
      \hline
      \multicolumn{11}{l}{} \\
      \multicolumn{11}{l}{Rui \& Huang} \\
      \hline
      10 & \ten{11.25} & 20.62 & \te{13.80} & 12.69 & \te{14.70} & 9.27 & \te{17.55} & 4.58 & \te{18.00} & 3.89 \\
      \hline
      5 & \te{8.20} & 17.01 & \te{9.40} & 9.62 & \te{11.35} & 8.56 & \te{14.10} & 22.83 & \te{13.70} & 22.75 \\
      \hline
      1 & \te{2.20} & 2.48 & \te{3.35} & 4.13 & \te{2.90} & 5.36 & \te{7.20} & 35.33 & \te{7.20} & 35.33 \\
      \hline
      0.5 & \te{1.2} & 1.85 & \te{2.55} & 4.89 & \te{2.90} & 8.31 & \te{2.75} & 4.51 & \te{3.80} & 24.27 \\
      \hline
      0.1 & \ten{0.25} & 0.20 & \ten{0.6} & 0.99 & \ten{0.4} & 0.36 & \ten{1.40} & 8.46 & \ten{0.80} & 2.27 \\
      \hline
      \hline
      \multicolumn{11}{l}{} \\
      \multicolumn{11}{l}{MARS on $\Qt$} \\
      \hline
      10 & \te{12.85} & 6.13 & \te{16.15} & 5.71 & \te{17.35} & 8.77 & \te{17.25} & 5.88 & \te{17.90} & 3.04 \\
      \hline
      5 & \te{7.45} & 9.10 & \te{13.80} & 14.38 & \te{13.85} & 9.50 & \te{13.35} & 27.21 & \te{13.25} & 11.78 \\
      \hline
      1 & \te{3.80} & 11.64 & \te{7.80} & 12.69 & \te{6.55} & 18.47 & \te{8.85} & 16.45 & \te{7.60} & 21.09 \\
      \hline
      0.5 & \te{2.15} & 5.61 & \te{5.25} & 14.09 & \te{5.10} & 14.20 & \te{4.8} & 11.85 & \te{4.6} & 11.41 \\
      \hline
      0.1 & \ten{0.8} & 1.64 & \te{1.65} & 2.87 & \te{2.20} & 6.91 & \te{2.00} & 6.42 & \te{2.70} & 7.27 \\
      \hline
      \hline
      \multicolumn{11}{l}{} \\
      \multicolumn{11}{l}{Riemann} \\
      \hline
      10 & \ten{6.45} & 3.42 & \te{17.25} & 3.99 & \te{18.35} & 3.50 & \te{17.80} & 4.06 & \te{18.50} & 2.16 \\
      \hline
      5 & \te{5.95} & 2.68 & \te{15.50} & 7.53 & \te{15.50} & 5.42 & \te{15.35} & 16.34 & \te{15.60} & 3.94 \\
      \hline
      1 & \te{5.90} & 4.73 & \te{10.85} & 6.34 & \te{10.75} & 14.41 & \te{11.70} & 10.64 & \te{11.35} & 13.92 \\
      \hline
      0.5 & \te{3.20} & 3.96 & \te{9.80} & 6.69 & \te{9.65} & 13.19 & \te{9.25} & 9.99 & \te{10.00} & 6.84 \\
      \hline
      0.1 & -  &  - & \te{7.35} & 7.50 & \te{7.90} & 9.88 & \te{7.30} & 7.17 & \te{8.55} & 9.84 \\
      \hline
      \hline
      \multicolumn{11}{l}{} \\
      \multicolumn{11}{l}{Aspects} \\
      \hline
      10 & \te{7.25} & 3.68 & \te{19.28} & 4.01 & \te{20.12} & 3.99 & \te{21.34} & 6.01 & \te{22.21} & 2.06 \\
      \hline
      5 & \ten{2.17} & 2.24 & \te{16.12} & 7.01 & \te{15.71} & 5.21 & \te{16.88} & 12.11 & \te{15.86} & 4.03 \\
      \hline
      1 & \ten{1.70} & 3.00 & \te{11.03} & 7.14 & \te{10.50} & 12.28 & \te{13.32} & 10.93 & \te{14.01} & 13.19 \\
      \hline
      0.5 & \te{4.13} & 3.88 & \te{10.92} & 7.22 & \te{10.88} & 12.03 & \te{10.15} & 8.67 & \te{11.34} & 7.42 \\
      \hline
      0.1 & -  &  - & \te{4.98} & 3.20 & \te{8.77} & 10.01 & \te{8.00} & 7.02 & \te{9.21} & 9.02 \\
      \hline
    \end{tabular}
  \end{center}
  \caption{Comparison of the methods analyzed in this section
    with the result of random selection. The table report average and
    variance for various values of the random probability and the
    number of feedback images. Averages that show a statistically
    relevant difference with random selection ($p<0.01$) are shown in
    boldface.}
  \label{ohlechance}
\end{figure}

A \emph{treatment} consists in a given average number of random hits
($\bar{k}=10,5,1,0.5,0.1$) and a given number of relevance feedback
Images ($r=2,5,10,20,30$), which are chosen randomly in the target
category. For each treatment we report in Figure~\ref{ohlechance} the
average number of hits and its variance. Numbers in boldface
correspond to statistically significant differences ($p<0.01$).

All methods, with the exception of the straight Rocchio algorithm, are
significantly better than random in most cases. Apart from this and
the Rui \& Huang method with $\bar{k}=0.1$, there are only five cases
in which the methods are not significantly better than random, all
happening when $r=2$. This is hardly surprising: relevance feedback
works on information from user's inputs, and with $r=2$ there is very
little information to work with. Moreover, we must take into account
that the feedback images are taken at random from a linguistic
category (the target); some images are good informative examples of
the visual content of their categories, others are poor examples and
using them for feedback is more deceiving than helpful. With $r=2$
there are not enough samples to \dqt{average out} the effects of poor
choices. This is confirmed by the generally high variance that we
obtain for $r=2$. In view of these observations, we shall not use the
case $r=2$ when comparing the methods, nor shall we compare with
Rocchio's algorithm.

\def\tblock{\rule{2.5em}{0pt}}

\begin{figure}
  \begin{center}

    \vspace{55em}

    \begin{rotate}{90}
    \begin{tabular}{rr||c|c|c|c||c|c|c|c||c|c|c|c||c|c|c|c||}
      \multicolumn{2}{l}{} &
      \multicolumn{4}{c}{Latent} & 
      \multicolumn{4}{c}{Riemann} & 
      \multicolumn{4}{c}{MARS on $\Qt$} & 
      \multicolumn{4}{c}{Rui \& Huang} \\
    &  
      \setlength{\unitlength}{1em}
      \makebox(1.5,0)[rb]{\tabsym} & 
      5 & 10 & 20 & 30 & 
      5 & 10 & 20 & 30 & 
      5 & 10 & 20 & 30 & 
      5 & 10 & 20 & 30 \\
      \cline{2-18}
      \cline{2-18} 
   & 10  & \ten{0.95}  & \ten{1.05}  & \te{1.05}  & \ten{2.50}  
         & \te{3.15}   & \te{1.55}  & \ten{1.00} &  \ten{3.25}  
         & \ten{2.05}  & \ten{0.55} & \ten{0.45} &  \ten{2.65}
         & \ten{-0.30} & \ten{-2.10}& \te{0.75}  &  \te{2.75}   \\
      \cline{3-18} 
   & 5   & \te{2.05}  & \te{1.80}  & \te{2.40}  & \te{2.90}  
         & \ten{2.65}  & \ten{1.45} & \te{2.05}  &  \te{2.25}    
         & \ten{0.95}  & \ten{-0.20}& \ten{-.05} &  \ten{-0.10}  
         & \te{-3.45}  & \te{-2.70} & \ten{0.80} &  \ten{0.35}   \\
      \cline{3-18}
   & 1   & \te{4.00}  & \te{2.00}  & \te{3.05}  & \te{3.45} 
         & \te{3.10}  &  \ten{1.25} & \ten{2.40} &  \te{3.15}    
         & \ten{0.05} &  \te{-2.95} & \ten{-0.45}&  \ten{-0.60}  
         & \te{-4.40} &  \te{-6.60} & \ten{-2.10}&  \te{-1.00}   \\
      \cline{3-18} 
  \makebox(0,0){\begin{rotate}{90}MARS\end{rotate}}
   & 0.5 & \te{3.80}  & \te{3.05}  & \te{4.00}  &  \te{7.05}
         & \te{3.20}  &  \ten{2.25} & \te{3.15}  &  \te{5.65}     
         & \ten{-1.35}&  \te{-2.30} & \ten{-1.30}&  \ten{0.25}    
         & \te{-4.05} &  \te{-4.50} & \te{-3.35} &  \ten{-0.55}   \\
      \cline{3-18} 
   & 0.1 & \te{3.00}  & \te{3.80}  & \te{4.00}  &  \te{7.05}
         & \te{2.00}  &  \te{3.25}  & \te{4.80}  &  \te{6.45}    
         & \te{-3.30} &  \te{-2.45} & \ten{-0.50}&  \ten{0.50}   
         & \te{-4.75} &  \te{-4.25} & \te{-1.10} &  \te{-1.30}   \\
      \cline{3-18} \\
      \cline{2-18} 
   & 10 & \te{4.20}   & \te{4.65}  & \te{1.35} &  \ten{0.7}  
        & \te{3.45}  & \te{3.65}   & \ten{0.25}  & \ten{0.5}    
        & \te{2.35}  & \te{2.65}   & \ten{-0.30} & \ten{-0.10}  
        & \multicolumn{4}{l}{} \\
      \cline{3-14} 
   & 5  & \te{6.85}  & \te{3.85} & \ten{0.85}  &  \te{2.20}  
        & \te{6.10}  & \te{3.65}   & \ten{1.25}  & \ten{1.90}      
        & \te{4.40}  & \te{2.50}   & \ten{-0.75} & \ten{-0.45}   
        & \multicolumn{4}{l}{} \\
      \cline{3-14} 
   & 1  & \te{7.25}  &  \te{6.90} & \te{4.80} &  \te{4.95}  
        &  \te{7.50}  & \te{7.85}   & \te{4.50}  &  \te{4.15}    
        &  \te{4.45}  & \te{3.65}   & \ten{1.85} &  \ten{0.40}   
        & \multicolumn{4}{l}{} \\
      \cline{3-14} 
   &0.5 & \te{7.25}  &  \te{7.55} & \te{5.95}  &  \te{6.45}
        & \te{7.25}  & \te{6.75}   & \te{6.5}   &  \te{6.20}    
        & \te{2.70}  & \ten{2.20}  & \ten{2.05} &  \ten{0.80}   
        & \multicolumn{4}{l}{} \\
      \cline{3-14} 
  \makebox(0,0){\begin{rotate}{90}Rui \& Huang\end{rotate}}
   &0.1 & \te{7.85}  &  \te{7.25}  & \te{6.70}  &  \te{8.00} 
        & \te{6.75}  & \te{7.50}   & \te{5.90}   & \te{7.75}   
        &  \te{1.05} & \te{1.80}   & \te{0.60}   & \te{1.90}   
        & \multicolumn{4}{l}{} \\
      \cline{3-14} \\
      \cline{2-14} 
   & 10 & \ten{1.35}  & \te{1.55}   & \te{2.05}  & \ten{0.90} 
        & \te{1.10}  & \te{1.00}   & \ten{0.55}  & \te{0.60}  
        & \multicolumn{8}{l}{} \\
      \cline{3-10} 
   & 5  & \te{2.05}  & \te{1.85}   & \te{2.40}  & \ten{2.85}  
        & \te{1.70}  & \te{1.65}   & \te{2.00}  &  \te{2.35} 
        & \multicolumn{8}{l}{} \\
      \cline{3-10} 
   & 1  & \te{2.90}  & \te{3.45}   & \te{3.00}  &  \te{3.25}  
        & \te{3.05}  & \te{4.20}   & \te{2.85}  &  \te{3.75}  
        & \multicolumn{8}{l}{} \\
      \cline{3-10} 
   &0.5 & \te{4.85}  & \te{4.75}   & \te{4.15}   &  \te{4.05}  
        & \te{4.55}  & \te{4.65}   & \te{4.45}  &  \te{5.40}
        & \multicolumn{8}{l}{} \\
      \cline{3-10} 
  \makebox(0,0){\begin{rotate}{90}MARS on $\Qt$\end{rotate}}
   &0.1 & \te{6.10}  & \te{6.35}   & \te{5.85}   & \te{6.2}  
        & \te{5.70}  & \te{5.70}   & \te{5.30}  &  \te{5.85}
        & \multicolumn{8}{l}{} \\
      \cline{3-10} \\
      \cline{2-10} 
   & 10 & \ten{0.5}  & \ten{1.45}   & \te{4.5}  & \te{3.15}  
        & \multicolumn{12}{l}{} \\
      \cline{3-6} 
   & 5  & \ten{-0.10}  & \ten{1.01}   & \te{2.05}  &  \te{4.30} 
        & \multicolumn{12}{l}{} \\
      \cline{3-6} 
   & 1  & \ten{1.10}  & \ten{1.85}   & \te{2.85}  &  \te{6.20}  
        & \multicolumn{12}{l}{} \\
      \cline{3-6} 
   &0.5 & \ten{0.90}  & \te{3.25}   & \te{3.20}  &  \te{5.85}
        & \multicolumn{12}{l}{} \\
      \cline{3-6} 
  \makebox(0,0){\begin{rotate}{90}Rieamann\end{rotate}}
   &0.1 & \ten{1.35}  & \te{4.55}   & \te{5.70}  &  \te{5.25}
        & \multicolumn{12}{l}{} \\
      \cline{3-6} 
    \end{tabular}
    \end{rotate}
  \end{center}
  \caption{Comparison of the methods considered in this note. Numbers
    represent the average number of hits for the method in the column
    minus the average number of hits for the method in the rows with a
    precision of 1 1/2 digits. Positive number indicate that the
    method in the column performs better. Numbers in boldface indicate
    statistically significant differences ($p<0.01$).}
  \label{compare}
\end{figure}

Figure~\ref{compare} contains the results of the comparison of five
methods (we do not consider Rocchio here). The numerical values are the differences between the average number of hits (for given $\bar{k}$
and $r$) of the method in the column and that for the method in the
row. Positive values mean that the method in the column performs
better than that in the row; boldface numbers indicate statistically
significant differences. So, for example, the boldface 5.20 in the top
left corner of the \dqt{Riemann/Rocchio} intersection indicates that
for $\bar{k}=10$ and $r=5$, the Riemann method shows on average 5.20
images from the target set more than Rocchio's algorithm, and that the
difference is statistically significant.

From these results, we can draw a series of general conclusions.

\begin{description}
\item[i)] All the methods do work to some degree, that is, they
  perform almost always significantly better than chance. One
  exception is Rocchio's algorithm, which is indistinguishable from
  chance for small target categories ($\bar{k}=0.1$) or large feedback
  sets ($r=30$). The reasons for this poor performance in the latter
  case is not clear: it appears that the excess of input information
  confuses the algorithm, but further analysis would be needed, which
  is beyond the scope of this paper.
\item[ii)] The Rui \& Huang method performs rather poorly on the
  largest data base ($\bar{k}=0.1$); the reasons for this behavior are
  also not clear. The reduced dimensionality space does have enough
  information, since the other methods that use it (\dqt{MARS on
    $\Qt$}, \dqt{Riemann} and \dqt{Latent}) do perform well (see
  points iii) and iv) below).
\item[iii)] The similar performance of \dqt{MARS} and \dqt{MARS on
  $\Qt$} shows that the dimensionality reduction of the semantic query
  space doesn't result in a loss of information, so that performance
  is maintained. The advantage of the query space is execution time:
  \dqt{MARS on $\Qt$} runs roughly one order of magnitude faster than
  \dqt{MARS}%
  \footnote{Measuring execution time was not part of our experimental
    design, and we took no particular care to control the noise
    variables and to separate them from the measured
    variables. Because of this, we can't provide quantitative results,
    but only qualitative observations.}%
  .
\item[iv)] \dqt{MARS} and \dqt{MARS on $\Qt$} outperform \dqt{Rui \&
  Huang} (which, we remind the reader, is executed on the query space
  $\Qt$). Here, again, the reasons are not clear and should be
  investigated further, but some considerations similar to those in
  ii) can be made. \dqt{Rui \& Huang} has been shown to work well in
  the original feature space, while our results show that its
  performance degrades in the query space. MARS, on the other hand,
  performs similarly in the whole feature space and on the query
  space. The main difference between MARS and the Rui \& Huang
  algorithm is the rotation of the components of the input space, so
  it is possible that the degradation of \dqt{Rui \& Huang} be due to
  this rotation. These considerations might point to a characteristic
  of the space $\Qt$, namely that its canonical axes are preferential
  and that rotating $\Qt$ leads to information loss. The fact
  that $\Qt$ is obtained from the feature space by applying non-linear
  operators, which are, as such, not necessarily rotation invariant,
  supports this interpretation, but further work is necessary to
  confirm this characteristic of $\Qt$.
\item[v)] The \dqt{Riemann} and \dqt{Latent} methods perform better
  than the others. The difference between the two, as well as with the
  two versions os MARS, is evident mainly for high values of $r$ and
  small values of $\bar{k}$. This is not surprising: these models,
  especially \dqt{Latent}, are quite complex with a relatively high
  number of free parameters, and need a rather large number of samples
  to settle on a good solution.
\end{description}

A final consideration might be of use to the designers of information
systems. The results that we obtained point at the opportunity of a
mixed strategy, depending on the number of positive answers
available. If the feedback consists of a few samples (e.g., in the
first iterations) then there is not enough information available to
justify the use of a complex method. In this case, simpler methods
such as \dqt{MARS on $\Qt$} might be a good choice. As the number of
positive examples increases, the additional information can be better
exploited by a more complex method with more free parameters, leading
to better performance and thereby justifying the larger computational
effort of the more complex methods such as \dqt{Riemann} or
\dqt{Latent}. In some cases, simple methods might actually get
confused by too much information, as witnessed in
Figure~\ref{ohlechance} by the poor performance of Rocchio with $r=30$.

\section{Conclusions}
In this paper, we have developed two methods for relevance feedback
based on the data-directed manipulation of the geometry of suitable
feature spaces. The first method assumes a reduced-dimensionality
query space, $\Qt$, models the distribution of positive samples with a
Gaussian, and transforms the Gaussian into a Riemann metric of a
modified query space. The distance from the query along the geodesics
of this space is then used to re-score the data base.

The second method uses latent variables, and divides the query space
in a collection of \dqt{visual words}. In each word space, the
distribution of samples is modeled as a mixture of Gaussians
controlled by the latent variables. The distribution is obtained by
applying a modification of the EM algorithm to adapt it to this case,
in which the probability distribution is discrete in the set of items
and continuous in the word spaces. Once the distribution is obtained,
we operate as in the first model to obtain a Riemann metric that is
then used to re-score the data base. In this paper, we have used the
discrete variables only to model the identity of the images, but the
method can easily be extended to model associations between words and
images, possible associated to semantic model of short texts
\cite{li:17} or to n-gram models \cite{shirakawa:17}, techniques often
used in mixed models of multimedia data.

We have developed a system-neutral, semantic-based testing
methodology, and we have applied it to compare the performance of up
to six different methods. The results that we have obtained led to
some guidelines for the design of practical relevance feedback
systems. They also hint at a possible information loss when the
query space is rotated, possibly due to the non-linearity in the
derivation of the space. This analysis could not be pursued here, but
it opens up interesting perspectives, certainly worth exploring.

One point that this paper has left open is the use of negative
samples. The heterogeneity of criteria that may lead one to mark an
image as negative has hitherto hampered their statistical
modeling. Latent variables, which naturally model a multiplicity of
criteria offer a possibility in this sense, but the amount of negative
examples necessary makes this approach problematic.

A relevant contribution of this paper has been the development of the
mixed EM algorithm. Space limitations did not allow us to present a
detail analysis of this algorithm, in particular we could not prove
convergence, but this results can easily be derived from the similar
result for the discrete version. The most interesting aspect of this
algorithm is its potential applicability as a general representation
method in all cases in which discrete and continuous features are used
at the same time to characterize images, such as is the case, for
example, when an image is represented by visual features (continuous)
and associated words (discrete). We believe that this version of EM
can provide a powerful tool for these situations. But this, of course,
will have to be proved in work yet to come.

\appendix

\section{The average number of hits in the random case}
\label{appendix2}
Consider the following problem: we have a set $D$ with $N$
elements. In $D$, we mark a \emph{target set} $M\subseteq{D}$ with $m$
elements. Now we pick at random $q$ elements of $D$; how many of these
elements will come from $M$, on average?

For the sake of convenience, we indicate with $p=N-m$ the number of
elements not in $M$. To fix the ideas, imagine an urn or a jar
containing $N$ balls; $m$ of these balls are black, the rest ($p$ of
them) are white. We extract $q$ balls at random: what is the
probability that $k$ of these balls will be black?

The balls will come out, white or colored, in a specific sequence and,
in principle, different sequences have different probability of
occurring. To see how things can be worked out, let us consider an
example. Let $q=7$, $m=4$, and assume that we extract the balls in the
following order:
\begin{center}
  \setlength{\unitlength}{1.5em}
  \begin{picture}(6,1)
    \put(0,0){\circle{0.3}}
    \put(1,0){\circle*{0.3}}
    \put(2,0){\circle*{0.3}}
    \put(3,0){\circle{0.3}}
    \put(4,0){\circle*{0.3}}
    \put(5,0){\circle{0.3}}
    \put(6,0){\circle*{0.3}}
  \end{picture}
\end{center}

When we extract the first ball, the probability of it being white is
$p_1=p/N$. Now we extract the second: there are $m$ black balls and
$N-1$ balls remaining in the urn, so the probability of the second
ball being black is $p_2=m/(N-1)$; for the third ball, there are $m-1$
black balls left out of $N-2$, so $p_3=(m-1)/(N-2)$. Similarly
\begin{equation}
  p_4=\frac{p-1}{N-3}\ \ \ p_5=\frac{m-2}{N-4}\ \ \ p_6=\frac{p-2}{N-5}\ \ \ p_7=\frac{m-3}{N-6}
\end{equation}
that is, the probability of extracting this particular sequence of
white and black balls is
\begin{equation}
  p = \frac{p}{N}\cdot\frac{m}{N-1}\cdot\frac{m-1}{N-2}\cdot\frac{p-1}{N-3}\cdot\frac{m-2}{N-4}\cdot\frac{p-2}{N-5}\cdot\frac{m-3}{N-6}
\end{equation}
we can rearrange the term as
\begin{equation}
  \label{whoopie}
  \begin{aligned}
    p &= \frac{m(m-1)(m-2)(m-3)\cdot p(p-1)(p-2)}{N(N-1)(N-2)(N-3)(N-4)(N-5)(N-6)} \\
      &= \frac{\mystyle \prod_{i=0}^{k-1}(m-i)\prod_{i=0}^{q-k-1}(p-i)}{\mystyle \prod_{i=0}^{q-1}(N-i)} \\
      &= \frac{\mystyle \prod_{i=1}^{k}(m+1-i)\prod_{i=1}^{q-k}(p+1-i)}{\mystyle \prod_{i=1}^{q}(N+1-i)}
  \end{aligned}
\end{equation}
It is evident that any sequence of extractions can be rearranged in
this way, which leads us to conclude that the probability of
extracting the $k$ black balls in a specific sequence is independent
of the specific sequence, and equal to the last equation in
(\ref{whoopie}).

The number of possible combinations is $\binom{q}{k}$, so the
probability of extracting $k$ marked elements out of $q$ is
\begin{eqnarray}
  p[k,q] & = & \binom{q}{k} \frac{\mystyle \prod_{i=1}^{k}(m+1-i)\prod_{i=1}^{q-k}(p+1-i)}{\mystyle \prod_{i=1}^{q}(N+1-i)} \nonumber \\
  & = & \frac{q!}{k!(q-k)!} \frac{\mystyle \prod_{i=1}^{k}(m+1-i)\prod_{i=1}^{q-k}(p+1-i)}{\mystyle \prod_{i=1}^{q}(N+1-i)} \nonumber \\
  & = & \frac{\mystyle \prod_{i=1}^{q} i}{\mystyle \prod_{i=1}^{k} i \cdot \prod_{i=1}^{k} i} 
        \frac{\mystyle \prod_{i=1}^{k}(m+1-i)\prod_{i=1}^{q-k}(p+1-i)}{\mystyle \prod_{i=1}^{q}(N+1-i)} \nonumber \\
  & = & \label{ouch}
        \frac{\mystyle \prod_{i=1}^{k} \left( \frac{m+1}{i} - 1 \right) \prod_{i=1}^{q-k} \left( \frac{p+1}{i} - 1 \right)}
             {\mystyle \prod_{i=1}^{q} \left( \frac{N+1}{i} - 1 \right)}
\end{eqnarray}
The average value that we are looking for is therefore
\begin{equation}
  \bar{k} = \sum_{k=1}^q k p[k,q] = \sum_{k=1}^q k \cdot         
        \frac{\mystyle \prod_{i=1}^{k} \left( \frac{m+1}{i} - 1 \right) \prod_{i=1}^{q-k} \left( \frac{p+1}{i} - 1 \right)}
             {\mystyle \prod_{i=1}^{q} \left( \frac{N+1}{i} - 1 \right)}
\end{equation}
%
%

\begin{theorem}
  \[
  \sum_{k=1}^q k p[k,q] = \frac{mq}{N}
  \]
\end{theorem}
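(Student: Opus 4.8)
The plan is to recognize the quantity $p[k,q]$ of (\ref{ouch}) for what it is --- the hypergeometric probability $\binom{m}{k}\binom{p}{q-k}/\binom{N}{q}$ --- and then to compute its mean, either probabilistically in one line or by a short combinatorial identity. I would present the probabilistic argument as the main proof, because it exposes why the answer is simply the ``expected fraction'' $m/N$ multiplied by the sample size $q$.

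The slick route exploits precisely the symmetry already established above: the probability of drawing a prescribed pattern of black and white balls depends only on how many of each color it contains, not on their order. Hence, if $X_i$ denotes the indicator of the event ``the $i$-th ball drawn is black'', the $X_i$ are exchangeable, so $P(X_i=1)$ is the same for every $i\in\{1,\dots,q\}$ and equals $P(X_1=1)=m/N$. Since $k=\sum_{i=1}^q X_i$, linearity of expectation gives $\bar k=\sum_{i=1}^q P(X_i=1)=q\cdot m/N=mq/N$, which is the claim; no manipulation of the sum $\sum_k k\,p[k,q]$ is needed.

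If one prefers to verify the identity directly in the algebraic form in which the theorem is stated, I would first check that the last line of (\ref{whoopie}) times $\binom{q}{k}$ is indeed $\binom{m}{k}\binom{p}{q-k}/\binom{N}{q}$ (a routine rewriting of the products as ratios of factorials, valid with the convention $\binom{a}{b}=0$ for $b<0$ or $b>a$). Then I would use the absorption identity $k\binom{m}{k}=m\binom{m-1}{k-1}$ to write $\sum_{k=1}^q k\,p[k,q]=\frac{m}{\binom{N}{q}}\sum_{k=1}^q\binom{m-1}{k-1}\binom{p}{q-k}$, reindex with $j=k-1$, apply Vandermonde's convolution $\sum_{j\ge0}\binom{m-1}{j}\binom{p}{q-1-j}=\binom{m-1+p}{q-1}=\binom{N-1}{q-1}$, and finally simplify $m\binom{N-1}{q-1}/\binom{N}{q}=mq/N$.

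I do not expect any genuine obstacle. The only points requiring care are (i) justifying the exchangeability claim, which the discussion preceding (\ref{whoopie}) has essentially already done, and (ii) keeping track of the vanishing boundary terms in the Vandermonde sum, which is disposed of once and for all by adopting the standard convention for binomial coefficients with out-of-range arguments. The combinatorial computation would go in a remark, since it adds rigor in the stated form but no insight beyond the indicator-variable argument.
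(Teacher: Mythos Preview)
Your proposal is correct, and both routes you sketch are valid. The indicator/exchangeability argument is particularly clean: the order-independence already established before (\ref{whoopie}) is exactly the exchangeability you need, so $P(X_i=1)=m/N$ for every $i$ and linearity of expectation finishes immediately. The Vandermonde route is equally sound once you rewrite $p[k,q]$ as $\binom{m}{k}\binom{p}{q-k}/\binom{N}{q}$, and your remarks about boundary terms and binomial conventions are the right caveats.

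The paper, however, takes a third route: induction on $q$. It derives the recurrence
\[
p[k,q]=\frac{q}{N+1-q}\left(\frac{m+1}{k}-1\right)p[k-1,q-1]
\]
directly from the product form in (\ref{ouch}), then splits $\sum_k k\,p[k,q]$ into a term handled by the normalization $\sum_k p[k,q-1]=1$ and a term handled by the inductive hypothesis $\sum_k k\,p[k,q-1]=m(q-1)/N$. This is more mechanical and longer than either of your arguments, but it has the virtue of being entirely self-contained: it requires neither the identification with the hypergeometric distribution nor any external identity such as Vandermonde's. Your indicator argument, by contrast, is shorter and explains \emph{why} the answer has the form (sample size)$\times$(population fraction); the paper's induction verifies the formula without illuminating it.
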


\begin{proof}
  We prove the result by induction over $q$.

  For $q=1$ the only possible values of $k$ are $0$ and $1$, therefore
  \begin{equation}
    \bar{k} = 0\cdot p[0,1] + 1\cdot p[1,1] = p[1,1] = \frac{m}{N} = \frac{mq}{N}
  \end{equation}

  For the general case, we express $p[k,q]$ in terms of $p[k-1,q-1]$ as 
  \begin{equation}
    \begin{aligned}
      p[k,q] &= 
              \frac{\mystyle \prod_{i=1}^{k} \left( \frac{m+1}{i} - 1 \right) \prod_{i=1}^{q-k} \left( \frac{p+1}{i} - 1 \right)}
             {\mystyle \prod_{i=1}^{q} \left( \frac{N+1}{i} - 1 \right)} \\
             &= 
              \frac{\mystyle \left( \frac{m+1}{k} - 1 \right)
                \prod_{i=1}^{k-1} \left( \frac{m+1}{i} - 1 \right) \prod_{i=1}^{(q-1)-(k-1)} \left( \frac{p+1}{i} - 1 \right)}
             {\mystyle \left( \frac{N+1}{q} - 1 \right) \prod_{i=1}^{q-1} \left( \frac{N+1}{i} - 1 \right)} \\
             &= 
             \frac{q}{N+1-q} \left(\frac{m+1}{k}-1\right) p[k-1,q-1]
    \end{aligned}
  \end{equation}
  With this definition we can write
  \begin{equation}
    \begin{aligned}
      \bar{k} = \sum_{k-1}^q k p[k,q] &= 
                \frac{q}{N+1-q} \sum_{k=1}^q k \left(\frac{m+1}{k}-1\right) p[k-1,q-1] \\
                &= \frac{q}{N+1-q} \sum_{k=1}^q (m- (k-1)) p[k-1,q-1] \\
                &= \frac{q}{N+1-q} \left[ m \sum_{k=1}^q p[k-1,q-1] - \sum_{k=1}^q (k-1) p[k-1,q-1] \right] \\
                &= \frac{q}{N+1-q} \left[ m \sum_{k=0}^{q-1} p[k,q-1] - \sum_{k=0}^{q-1} k p[k,q-1] \right]
    \end{aligned}
  \end{equation}
  We have
  \begin{equation}
    \begin{aligned}
      \sum_{k=0}^{q-1} p[k,q-1] = 1 & \mbox{\rule{3em}{0pt}(normalization of the probability)} \\
      \sum_{k=0}^{q-1} k p[k,q-1] = \frac{m(q-1)}{N} & \mbox{\rule{3em}{0pt}(inductive hypothesis)}
    \end{aligned}
  \end{equation}
  Therefore
  \begin{equation}
    \begin{aligned}
      \bar{k} &= \frac{q}{N+1-q} \left[ m  - \frac{m(q-1)}{N} \right] \\
              &= \frac{q}{N+1-q} \frac{m(N+1-q)}{N} \\
              &= \frac{qm}{N}
    \end{aligned}
  \end{equation}
\end{proof}

\end{document}